\newtheorem{theorem}{Theorem}[section]
\newtheorem{lemma}[theorem]{Lemma}
\newtheorem{claim}[theorem]{Claim}
\newtheorem{corollary}[theorem]{Corollary}
\newtheorem{definition}[theorem]{Definition}
\newtheorem{proposition}[theorem]{Proposition}
\newtheorem{fact}[theorem]{Fact}
\newtheorem{question}[theorem]{Question}
\newcommand{\ignore}[1]{}
\newcommand{\cA}{{\cal A}}
\newcommand{\cB}{\mathcal{B}}
\newcommand{\cC}{{\cal C}}
\newcommand{\cD}{\mathcal{D}}
\newcommand{\cE}{{\cal E}}
\newcommand{\cF}{\mathcal{F}}
\newcommand{\cM}{{\cal M}}
\newcommand{\cP}{\mathcal{P}}
\newcommand{\cX}{{\cal X}}
\newcommand{\eps}{\varepsilon}
\newcommand{\br}{\mathbf{r}}
\newcommand{\bt}{\mathbf{t}}
\newcommand{\RR}{\mathbb{R}}
\newcommand{\norm}[1]{\left\lVert #1 \right\rVert}
\newcommand{\Exp}{\EX}
\newcommand{\Sec}[1]{\hyperref[sec:#1]{\S\ref*{sec:#1}}} 
\newcommand{\Eqn}[1]{\hyperref[eq:#1]{(\ref*{eq:#1})}} 
\newcommand{\Fig}[1]{\hyperref[fig:#1]{Fig.\,\ref*{fig:#1}}} 
\newcommand{\Tab}[1]{\hyperref[tab:#1]{Tab.\,\ref*{tab:#1}}} 
\newcommand{\Thm}[1]{\hyperref[thm:#1]{Theorem\,\ref*{thm:#1}}} 
\newcommand{\Fact}[1]{\hyperref[fact:#1]{Fact\,\ref*{fact:#1}}} 
\newcommand{\Lem}[1]{\hyperref[lem:#1]{Lemma\,\ref*{lem:#1}}} 
\newcommand{\Prop}[1]{\hyperref[prop:#1]{Prop.~\ref*{prop:#1}}} 
\newcommand{\Cor}[1]{\hyperref[cor:#1]{Corollary~\ref*{cor:#1}}} 
\newcommand{\Conj}[1]{\hyperref[conj:#1]{Conjecture~\ref*{conj:#1}}} 
\newcommand{\Def}[1]{\hyperref[def:#1]{Definition~\ref*{def:#1}}} 
\newcommand{\Alg}[1]{\hyperref[alg:#1]{Alg.~\ref*{alg:#1}}} 
\newcommand{\Ex}[1]{\hyperref[ex:#1]{Ex.~\ref*{ex:#1}}} 
\newcommand{\Clm}[1]{\hyperref[clm:#1]{Claim~\ref*{clm:#1}}} 
\newcommand{\Step}[1]{\hyperref[step:#1]{Step~\ref*{step:#1}}} 
\title{Nearly Optimal Bounds for Sample-Based Testing and Learning of $k$-Monotone Functions\thanks{A preliminary version of this work appeared at RANDOM 2024.}}
\author{
Hadley Black\thanks{This work was partially completed while the author was a PhD student at UCLA supported by NSF award AF:Small 2007682, NSF Award: Collaborative Research Encore 2217033.} \\
CUNY, Baruch College\\
hadley.black@baruch.cuny.edu
}
\newcommand\footnoteref[1]{\protected@xdef\@thefnmark{\ref{#1}}\@footnotemark}
\newcommand{\Expect}{\mathbb{E}}
\renewcommand{\Exp}{\Expect}
\renewcommand{\bt}{\pmb{t}}
\renewcommand{\norm}[1]{\left\lVert #1 \right\rVert}
\newcommand{\pr}{\mathbb{P}}
\renewcommand{\Pr}{\pr}
\newcommand{\Dyes}{\cD_{\texttt{yes}}}
\newcommand{\Dno}{\cD_{\texttt{no}}}
\newcommand{\Dterm}{\cD_{\texttt{term}}}
\newcommand{\mrk}{\mathcal{M}_{r,k}}
\renewcommand{\br}{\pmb{r}}
\newcommand{\block}{\mathsf{block}}
\newcommand{\blockpoint}{\mathsf{blockpoint}}
\newcommand{\cube}{\mathsf{cube}}
\begin{document}

\maketitle


\begin{abstract} We study monotonicity testing of functions $f \colon \{0,1\}^d \to \{0,1\}$ using \emph{sample-based} algorithms, which are only allowed to observe the value of $f$ on points drawn independently from the uniform distribution. A classic result by Bshouty-Tamon (J. ACM 1996) proved that monotone functions can be learned with $\exp(\widetilde{O}(\min\{\frac{1}{\eps}\sqrt{d},d\}))$ samples and it is not hard to show that this bound extends to testing. Prior to our work the only lower bound for this problem was $\Omega(\sqrt{\exp(d)/\varepsilon})$ in the small $\eps$ parameter regime, when $\varepsilon = O(d^{-3/2})$, due to Goldreich-Goldwasser-Lehman-Ron-Samorodnitsky (Combinatorica 2000). Thus, the sample complexity of monotonicity testing was wide open for $\eps \gg d^{-3/2}$. We resolve this question, obtaining a nearly tight lower bound of $\exp(\Omega(\min\{\frac{1}{\eps}\sqrt{d},d\}))$ for all $\eps$ at most a sufficiently small constant. In fact, we prove a much more general result, showing that the sample complexity of $k$\emph{-monotonicity} testing and learning for functions $f \colon \{0,1\}^d \to [r]$ is $\exp(\Omega(\min\{\frac{rk}{\eps}\sqrt{d},d\}))$. For testing with one-sided error we show that the sample complexity is $\exp(\Omega(d))$.

Beyond the hypercube, we prove nearly tight bounds (up to polylog factors of $d,k,r,1/\eps$ in the exponent) of $\exp(\widetilde{\Theta}(\min\{\frac{rk}{\varepsilon}\sqrt{d},d\}))$ on the sample complexity of testing and learning measurable $k$-monotone functions $f \colon \mathbb{R}^d \to [r]$ under product distributions. Our upper bound improves upon the previous bound of $\exp(\widetilde{O}(\min\{\frac{k}{\varepsilon^2}\sqrt{d},d\}))$ by Harms-Yoshida (ICALP 2022) for Boolean functions ($r=2$). \end{abstract}

\thispagestyle{empty}
\newpage

\tableofcontents
\thispagestyle{empty}
\setcounter{page}{0}
\newpage

\newpage

\section{Introduction}

A function $f \colon \cX \to \RR$ over a partial order $\cP = (\cX,\preceq)$ is $k$-\emph{monotone} if there does not exist a chain of $k+1$ points $x_1 \prec x_2 \prec \cdots \prec x_{k+1}$ for which (a) $f(x_{i+1}) - f(x_i) < 0$ when $i$ is odd and (b) $f(x_{i+1}) - f(x_i) > 0$ when $i$ is even. When $k=1$, these are the \emph{monotone} functions, which are the non-decreasing functions with respect to $\preceq$. Monotone and $k$-monotone \emph{Boolean} functions over domains $\{0,1\}^d$, $[n]^d$, and $\RR^d$ have been the focus of a significant amount of research in property testing and computational learning theory. We give an overview of the literature in \Cref{sec:related}.

The field of property testing is concerned with the design and analysis of sub-linear time randomized algorithms for determining if a function has, or is far from having, some specific property. A key aspect in the definition of a property testing algorithm is the type of access it has to the function. Early works on property testing, e.g. \cite{RS96,GGR98}, focused on the notion of \emph{query-based} testers, which are allowed to observe the value of the function on any point of their choosing, and since then this has become the standard model. The weaker notion of \emph{sample-based} testers, which can only view the function on independent uniform samples, was also considered by \cite{GGR98} and has received some attention over the years, see e.g. \cite{KR00,BBBY12,FischerLV15,GR16,FH23,FH24}. Sample-based algorithms are considered more natural in many settings, for example in computational learning theory, where they are the standard model. In fact, sample-based testing and learning are closely related problems; given a learning algorithm, it is always possible to design a testing algorithm with the same sample complexity, up to an additive $\text{poly}(1/\eps)$ factor\footnote{See \Cref{lem:learning->testing} for a precise statement. Also, note that if the learning algorithm is \emph{proper}, then the time complexity is also preserved. If the learning algorithm is \emph{improper}, then there is a time complexity blow-up, but the sample complexity is still preserved.}. 

For many fundamental properties, there is still a large gap between how much we know in the query-based vs the sample-based models. Monotonicity (and $k$-monotonicity) is such a property; despite a vast body of research on query-based monotonicity testing over the hypercube $\{0,1\}^d$, the only work we know of which considers this problem in the sample-based model is \cite{GGLRS00}, who gave an upper bound of $O(\sqrt{2^d/\eps})$ and a matching lower bound for the case when $\eps = O(d^{-3/2})$ on the number of samples needed to test monotonicity of functions $f \colon \{0,1\}^d \to \{0,1\}$. The upper bound for learning monotone Boolean functions due to \cite{BshoutyT96,LRV22} also implies a testing upper bound of $\exp(\widetilde{O}(\frac{1}{\eps}\sqrt{d}))$. Thus, this question has been wide open for $\eps \gg d^{-3/2}$. 

Our work addresses this gap in the monotonicity testing literature, proving a lower bound which matches the learning upper bound for all $\eps$ at most some constant, up to a factor of $\log d$ in the exponent. More generally, we prove a nearly tight lower bound for $k$-monotonicity testing of functions, $f \colon \{0,1\}^d \to [r]$, i.e. functions with image size at most $r$. To round out our results, we also give an improved learning algorithm for $k$-monotone functions over $\RR^d$ under product distributions whose sample complexity matches our sample-based testing lower bound, up to poly-logarithmic factors in the exponent.

\subsection{Results} \label{sec:results}

Before explaining our results and the context for them, we first provide some terminology and basic notation. Given a domain $\cX$ and a distribution $\mu$ over $\cX$, we denote the Hamming distance between two functions $f,g \colon \cX \to \RR$ under $\mu$ by $d_{\mu}(f,g) = \pr_{x \sim \mu}[f(x) \neq g(x)]$. We say that $f$ is $\eps$-far from $k$-monotone if $d_{\mu}(f,g) \geq \eps$ for every $k$-monotone function $g$. The results in this paper pertain to sample-based testing and learning of $k$-monotone functions with respect to Hamming distance. We use the following terminology: 

\begin{itemize}
    \item The \emph{example oracle} for $f$ under $\mu$, denoted by $EX(f,\mu)$, when queried, generates an example $(x,f(x))$ where $x$ is sampled according to $\mu$. 
    \item A \emph{sample-based $k$-monotonicity tester} under $\mu$ is a randomized algorithm which is given access to $EX(f,\mu)$ for an arbitrary input function $f$ and satisfies the following: (a) if $f$ is $k$-monotone, then the algorithm accepts with probability at least $2/3$, and (b) if $f$ is $\eps$-far from $k$-monotone, then the algorithm rejects with probability at least $2/3$. The tester has \emph{one-sided error} if in case (a) it accepts with probability $1$.
    \item A \emph{sample-based learning algorithm for $k$-monotone functions} under $\mu$ is a randomized algorithm which is given access to $EX(f,\mu)$ for an arbitrary $k$-monotone input function $f$ and outputs a hypothesis $h$ such that $d_{\mu}(f,h) \leq \eps$ with probability at least $1-\delta$. If left unspecified, $\delta = 1/3$.
\end{itemize}

In all of the above definitions if $\mu$ is unspecified, then it is the uniform distribution. Testing and learning are closely related problems; any sample-based learning algorithm can be used to construct a sample-based tester with the same sample complexity. We refer to this transformation as the testing-by-learning reduction and although this is not a new idea we provide a proof in \Cref{sec:testing-by-learning} for completeness.

Finally, we recall some important learning theory terminology. A learning algorithm for concept class $\cC$ is called \emph{proper} if it always outputs a hypothesis $h \in \cC$, and is called \emph{improper} if it is allowed to output arbitrary $h$. Given a function $f$, and a concept class $\cC$, let $d(f,\cC) = \min_{g \in \cC} d(f,g)$. An \emph{agnostic proper} learner is one which, given \emph{any} $f$ (not necessarily in $\cC$), outputs a hypothesis $h \in \cC$ for which $d(f,h) \leq d(f,\cC) + \eps$ with probability at least $1-\delta$.

\subsubsection{Sample-Based Testing and Learning on the Hypercube} 

The problem of \emph{learning} monotone Boolean functions over the hypercube $\{0,1\}^d$ was studied by \cite{BshoutyT96} who proved an upper bound\footnote{We remark that any function over $\{0,1\}^d$ can be learned exactly with $O(d 2^d) = \exp(O(d))$ samples by a coupon-collector argument. Combining this with the $\exp(O(\frac{1}{\eps}\sqrt{d} \log d))$ upper bound by \cite{BshoutyT96} yields $\exp(O(\min\{\frac{1}{\eps}\sqrt{d}\log d,d\}))$. We use this slightly clunkier notation involving the min to emphasize that our upper and lower bounds are nearly matching in all parameter regimes.} of $\exp(O(\min\{\frac{1}{\eps}\sqrt{d}\log d,d\}))$ for improper learning and very recently by \cite{LRV22,LV23} who obtained the same upper bound for agnostic proper learning. The improper learning upper bound was extended by \cite{BlaisCOST15} who showed an upper bound of $\exp(O(\min\{\frac{k}{\eps}\sqrt{d} \log d,d\}))$ and a nearly matching lower bound of $\exp(\Omega(\min\{\frac{k}{\eps}\sqrt{d},d\}))$ for learning $k$-monotone Boolean functions for any $k \geq 1$. The testing-by-learning reduction shows that their upper bound also holds for sample-based \emph{testing}. 
The only prior lower bound for sample-based testing that we're aware of is $\Omega(\sqrt{2^d/\eps})$ when $\eps = O(d^{-3/2})$ and $k=1$ \cite[Theorem 5]{GGLRS00}. Our main result is the following much more general lower bound for this problem, which we prove in \Cref{sec:LB}. 


\begin{theorem} [Testing Lower Bound] \label{thm:LB_2sided_samples_rk} There is an absolute constant $c > 0$ such that for all $\eps \leq c$, every sample-based $k$-monotonicity tester for functions $f\colon \{0,1\}^d \to [r]$ under the uniform distribution has sample complexity
\[
\exp\left(\Omega\left(\min\left\{\frac{rk}{\eps} \sqrt{d},d\right\}\right)\right) \text{.}
\]
\end{theorem}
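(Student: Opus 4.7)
The plan is a standard two-distribution lower bound argument adapted to the sample-based setting. I would construct distributions $\Dyes$ and $\Dno$ over functions $f \colon \{0,1\}^d \to [r]$, supported respectively on $k$-monotone functions and on functions that are $\eps$-far from $k$-monotone, and then bound the total variation distance between the $m$-sample distributions they induce. If this TV distance is less than $1/3$, no sample-based tester can succeed with probability $2/3$, yielding the desired lower bound on $m$.

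Following the template used for $k$-monotone learning lower bounds (e.g.\ Blais--Canonne--Oliveira--Servedio--Tan), I would identify $rk$ ``bands'' near the middle layer of the hypercube (each of uniform measure $\Theta(1/\sqrt{d})$) on which a base $k$-monotone $[r]$-valued function alternates in a pattern that maximally exploits the $k$-monotone and range-$[r]$ freedom. Into each band I would embed a collection of independent perturbation blocks; within each block, local labels are re-randomized so that (i) the single-point marginal of $f(x)$ is preserved under both $\Dyes$ and $\Dno$, and (ii) the local pattern is $k$-monotone under $\Dyes$ but contains a forbidden length-$(k{+}1)$ alternating chain under $\Dno$.

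The indistinguishability argument proceeds by a second-moment / chi-squared calculation. Because single-point marginals agree by design, the leading linear-in-$m$ term vanishes, and the dominant contribution to the TV distance comes from pairs of samples landing in the same perturbation block. A birthday-style count shows this contribution scales like $m^2/N$, where $N$ is the total number of blocks. For $\Dno$ to be $\eps$-far from $k$-monotone, the total perturbed mass must be $\Omega(\eps)$, which, combined with each block's $\Omega(\sqrt{d})$ dimensional cost (needed to fit an alternating chain of length $k{+}1$ with values in $[r]$) and the packing constraint within the $rk$ bands, forces $N \geq \exp(\Omega(rk\sqrt{d}/\eps))$ in the relevant regime. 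Requiring TV $< 1/3$ then yields $m \geq \sqrt{N} = \exp(\Omega(rk\sqrt{d}/\eps))$.

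The main obstacle will be the distance lower bound: showing that $\Dno$-functions are genuinely $\eps$-far from every $k$-monotone function requires a combinatorial witness-counting argument, typically via a disjoint-chains or rank-type lemma establishing that any $k$-monotone modification must disagree with a constant fraction of the planted violations. Getting the factors $r$ and $k$ to appear \emph{simultaneously} in the exponent is where the construction requires the most care, since the range-$[r]$ structure must interact correctly with the $k$-monotone alternation pattern without blowing up either the number of bands or the per-block dimension. The $\exp(\Omega(d))$ branch of the bound corresponds to the regime $\eps \lesssim rk/\sqrt{d}$, where the block size must shrink to $O(1)$ and the argument reduces to a coupon-collector-style lower bound over the full hypercube.
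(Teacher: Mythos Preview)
Your high-level framework matches the paper's exactly: two distributions $\Dyes/\Dno$, $\Theta(rk)$ Hamming-weight bands in the middle $\eps\sqrt{d}$ layers of the cube, many ``blocks'' inside each band whose labels are correlated under $\Dyes$ and independent under $\Dno$, a birthday bound giving TV distance $\approx m^2/N$, and a disjoint-chains argument for the $\eps$-farness of $\Dno$. What is missing is the concrete construction of the blocks, and your stated reasoning for their count does not match the actual mechanism. The paper's blocks are \emph{Talagrand random DNF} cells: within band $B_i$ one samples $N_i \approx 2^w$ random terms of width $w = \Theta(rk\sqrt{d}/\eps)$ and lets $U_{i,j}$ be the set of points in $B_i$ satisfying term $j$ and no other term. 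The width $w$ is tuned so that a point at Hamming weight $d/2+\ell$ satisfies $\approx 2^w(1/2 + \ell/d)^w$ terms in expectation, and this stays $\Theta(1)$ across a band precisely when the band has Hamming-width $O(\sqrt{d}/w) = O(\eps\sqrt{d}/(rk))$; this is exactly why $k(r-1)$ bands fit and why $N \approx 2^w = \exp(\Theta(rk\sqrt{d}/\eps))$. That calculation, not a per-block ``$\Omega(\sqrt{d})$ dimensional cost,'' is what drives the exponent, and without it (or an equivalent anti-chain packing argument) your plan does not yet justify the claimed block count.

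Your description of $\Dno$ also diverges in a way that would cause trouble. You say each block ``contains a forbidden length-$(k{+}1)$ alternating chain,'' but in the paper a single $U_{i,j}$ only carries i.i.d.\ $\{a,a{+}1\}$ labels and need not contain any long alternation by itself; indeed, points in distinct $U_{i,j}$'s within the same band are \emph{incomparable}, which is precisely what makes $\Dyes$ $k$-monotone. The long alternating chains witnessing $\eps$-farness are assembled \emph{across} all $k(r-1)$ bands along Lehman--Ron vertex-disjoint paths, and the distance bound follows from the ``every $(k'\ge 3k)$-alternating chain forces $\Omega(k')$ edits'' claim applied to these cross-band chains. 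If you try to localize the violations inside individual blocks as your sketch suggests, the blocks must be comparable internally and you lose the incomparability that lets $\Dyes$ stay $k$-monotone; you also will not get the $r$ and $k$ factors to multiply in the exponent, since that coupling comes from the global $k(r-1)$-band layering, not from per-block structure.
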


Even for the special case of sample-based monotonicity testing of Boolean functions ($k=1$ and $r=2$), \Cref{thm:LB_2sided_samples_rk} is already a new result, which matches the upper bound for learning by \cite{BshoutyT96} and is the first lower bound to hold for $\eps \gg d^{-3/2}$. Moreover, our lower bound is much more general, holding for all $r,k$, and is optimal in all parameters, $d,r,k,\eps$, up to a $\log d$ factor in the exponent. We show a nearly matching upper bound in \Cref{thm:UB-hypercube}.

We also note that the testing-by-learning reduction implies that the same lower bound holds for \emph{learning} with samples. As we mentioned, this result was already known for Boolean functions (the $r=2$ case) \cite{BlaisCOST15}, but the general case of $r \geq 2$ was not known prior to our work\footnote{It is possible that the techniques from \cite{BlaisCOST15} could be extended to provide an alternative proof of \Cref{cor:LB_learning}, but we have not checked whether this is the case.}. 

\begin{corollary} [Learning Lower Bound] \label{cor:LB_learning} There is an absolute constant $c > 0$ such that for every $\eps \leq c$, every sample-based uniform-distribution learning algorithm for $k$-monotone functions $f\colon \{0,1\}^d \to [r]$ has sample complexity
\[
\exp\left(\Omega\left(\min\left\{\frac{rk}{\eps} \sqrt{d},d\right\}\right)\right) \text{.}
\]
\end{corollary}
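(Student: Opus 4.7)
The plan is to deduce the learning lower bound from the testing lower bound of \Cref{thm:LB_2sided_samples_rk} by invoking the testing-by-learning reduction mentioned in the introduction (Lemma \ref{lem:learning->testing} of \Cref{sec:testing-by-learning}) contrapositively. Since a sample-based learner for $k$-monotone functions can always be converted into a sample-based $k$-monotonicity tester without substantially inflating the sample complexity, any lower bound for testing transfers to learning up to negligible additive terms.

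Concretely, suppose that there exists a sample-based learning algorithm $L$ for $k$-monotone functions $f \colon \{0,1\}^d \to [r]$ under the uniform distribution with sample complexity $s(d,r,k,\eps)$. Apply the testing-by-learning reduction: run $L$ with accuracy parameter $\eps/c$ for a suitable absolute constant $c$ (e.g.\ $c = 3$) to produce a hypothesis $h$, then draw a fresh batch of $O(1/\eps^2)$ samples to empirically estimate $d_\mu(f,h)$ via Hoeffding's inequality, accepting if the estimate is below a threshold of order $\eps$ and rejecting otherwise. Standard analysis shows this yields a valid sample-based $k$-monotonicity tester with total sample complexity $s(d,r,k,\eps/c) + O(1/\eps^2)$.

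Combining this with \Cref{thm:LB_2sided_samples_rk}, we obtain, for all $\eps$ below a sufficiently small absolute constant,
\[
s(d,r,k,\eps/c) + O(1/\eps^2) \;\geq\; \exp\!\left(\Omega\!\left(\min\!\left\{\tfrac{rk}{\eps}\sqrt{d},\,d\right\}\right)\right).
\]
For $\eps$ below an absolute constant and $d$ large, the right-hand side dominates the additive $\mathrm{poly}(1/\eps)$ overhead, so the lower bound applies to $s(d,r,k,\eps/c)$ itself. Rescaling $\eps/c \mapsto \eps$ only affects the hidden constant in the $\Omega(\cdot)$ (since $\eps$ appears in the denominator of the exponent), giving the claimed bound. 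The only point requiring care is checking that the additive $O(1/\eps^2)$ term from the distance-estimation step is indeed dwarfed by the exponential lower bound; this is immediate because $\min\{\tfrac{rk}{\eps}\sqrt{d}, d\}$ grows without bound in $d$ for any fixed $\eps$, and we may restrict attention to $d$ large enough that the exponential term exceeds the polynomial slack (small $d$ is handled by adjusting the constant in the $\Omega(\cdot)$). No serious obstacle arises; the proof is essentially a one-line reduction.
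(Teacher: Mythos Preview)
Your proposal is correct and matches the paper's approach exactly: the paper states this corollary as an immediate consequence of \Cref{thm:LB_2sided_samples_rk} via the testing-by-learning reduction (\Cref{lem:learning->testing}), and you have filled in precisely those details. One very minor caveat is that \Cref{lem:learning->testing} is stated for $\{\pm 1\}$-valued functions, but the argument you spell out (learn, then estimate Hamming distance by Hoeffding) works verbatim for $[r]$-valued functions, so this is not a gap.
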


On the upper bound side, a relatively straightforward argument extends the learning algorithm of \cite{BlaisCOST15} for Boolean $k$-monotone functions, to $k$-monotone functions with image size at most $r$. We give a short proof in \Cref{sec:UB-hypercube-proof}. This shows that our lower bounds in \Cref{thm:LB_2sided_samples_rk} and \Cref{cor:LB_learning} are tight up to a factor of $\log d$ in the exponent. 

\begin{theorem} [Learning Upper Bound for Hypercubes] \label{thm:UB-hypercube} There is a uniform-distribution learning algorithm for $k$-monotone functions $f \colon \{0,1\}^d \to [r]$ which achieves error at most $\eps$ with time and sample complexity
\[
\exp\left(O\left(\min\left\{\frac{rk}{\eps} \sqrt{d} \log d,d\right\}\right)\right) \text{.}
\]
\end{theorem}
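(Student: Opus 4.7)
The plan is to reduce the image-size-$r$ case to the Boolean case ($r=2$) via threshold decomposition, and then apply the algorithm of \cite{BlaisCOST15} as a black box. Given an input function $f \colon \{0,1\}^d \to [r]$, define for each $i \in [r-1]$ the Boolean threshold function $f_i \colon \{0,1\}^d \to \{0,1\}$ by $f_i(x) = \mathbbm{1}[f(x) > i]$. Note the reconstruction identity $f(x) = 1 + \sum_{i=1}^{r-1} f_i(x)$.

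The first key step is the observation that if $f$ is $k$-monotone, then each $f_i$ is also $k$-monotone. I would prove this by contrapositive: a chain $x_1 \prec \cdots \prec x_{k+1}$ that witnesses non-$k$-monotonicity of $f_i$ forces the values $f(x_1), \ldots, f(x_{k+1})$ to alternate between the half-spaces $\{\le i\}$ and $\{>i\}$ with the correct parity, which produces strict sign changes $f(x_{j+1})-f(x_j)$ of the required signs, contradicting $k$-monotonicity of $f$.

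The second step is to run the Boolean $k$-monotone learner of \cite{BlaisCOST15} on each $f_i$ with accuracy parameter $\eps/r$ and confidence boosted to, say, $1 - 1/(3r)$; crucially a single pool of samples $(x^{(j)}, f(x^{(j)}))$ suffices for all $r-1$ sub-problems since $f_i(x^{(j)})$ is determined by $f(x^{(j)})$. This yields hypotheses $h_i \colon \{0,1\}^d \to \{0,1\}$ using
\[
\exp\!\left(O\!\left(\frac{k}{\eps/r}\sqrt{d}\right)\right) = \exp\!\left(O\!\left(\frac{rk}{\eps}\sqrt{d}\right)\right)
\]
samples in total. I then output $h(x) = 1 + \sum_{i=1}^{r-1} h_i(x)$, which takes values in $[r]$. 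Since $h(x) \ne f(x)$ forces $h_i(x) \ne f_i(x)$ for some $i$, a union bound over $i \in [r-1]$ and over the confidence failures gives $\Pr_{x}[h(x) \ne f(x)] \le (r-1)\cdot (\eps/r) \le \eps$ with overall probability at least $2/3$.

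Finally, to obtain the $\exp(O(d))$ branch of the minimum, I would separately note that drawing $O(d \cdot 2^d) = \exp(O(d))$ uniform samples suffices, by a standard coupon-collector argument, to observe $f$ at every point of $\{0,1\}^d$ and hence learn it exactly; taking the better of the two algorithms gives the claimed bound $\exp(O(\min\{\frac{rk}{\eps}\sqrt{d}, d\}))$. There is no real obstacle here — the argument is a clean reduction — the only point that deserves care is verifying that the threshold functions $f_i$ inherit $k$-monotonicity, which the contrapositive chain-lifting argument settles.
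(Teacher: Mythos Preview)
Your proposal is correct and matches the paper's approach essentially exactly: the paper also reduces to the Boolean case via threshold functions $f_t(x) = \mathbf{1}(f(x) \geq t)$, runs the \cite{BlaisCOST15} learner on each with accuracy $\eps/r$ and confidence $1-1/(3r)$, and combines the hypotheses with a union bound. If anything, you are slightly more careful than the paper in explicitly verifying that the thresholds inherit $k$-monotonicity and in spelling out the coupon-collector argument for the $\exp(O(d))$ branch of the minimum.
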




The testing-by-learning reduction again gives us the following corollary.

\begin{corollary} [Testing Upper Bound for Hypercubes] \label{cor:UB-hypercube-testing} There is a sample-based $k$-monotonicity tester for functions $f \colon \{0,1\}^d \to [r]$ with sample complexity
\[
\exp\left(O\left(\min\left\{\frac{rk}{\eps} \sqrt{d} \log d,d\right\}\right)\right) \text{.}
\]
\end{corollary}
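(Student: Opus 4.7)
The plan is to deduce this corollary directly from \Thm{UB-hypercube} via the testing-by-learning reduction (\Lem{learning->testing}) alluded to repeatedly in the introduction. That reduction converts any sample-based learner with sample complexity $m(\eps)$ into a sample-based tester of complexity $m(\eps/c) + \poly(1/\eps)$ for an absolute constant $c$. Applied to the learner of \Thm{UB-hypercube}, this gives a tester with sample complexity $\exp(O(\min\{\frac{rk}{\eps}\sqrt{d},d\})) + \poly(1/\eps)$, and the additive polynomial term is trivially absorbed into the exponential.

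Concretely, I would implement the tester as follows. Set $\eps' = \eps/4$. First, draw $M = \exp(O(\min\{\frac{rk}{\eps'}\sqrt{d},d\}))$ uniform samples and invoke the learning algorithm of \Thm{UB-hypercube} with error parameter $\eps'$ and failure probability $1/6$ to obtain a hypothesis $h$. Then draw an independent batch of $N = O(1/\eps^2)$ samples $x_1,\ldots,x_N$ and form the empirical distance estimate $\hat d = \frac{1}{N}\sum_{i=1}^{N} \mbone[f(x_i)\neq h(x_i)]$. By Hoeffding's inequality, $|\hat d - \Pr_x[f(x)\neq h(x)]| \le \eps/8$ except with probability at most $1/6$. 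The tester accepts iff $\hat d \le \eps/2$. Completeness is immediate: if $f$ is $k$-monotone then the learner guarantees $\Pr_x[f(x)\neq h(x)] \le \eps/4$, so $\hat d \le 3\eps/8$ and the tester accepts with probability at least $2/3$.

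The one step that requires attention is soundness, which needs $h$ itself to lie in the class of $k$-monotone functions. Once this holds, $\eps$-farness of $f$ from $k$-monotone forces $\Pr_x[f(x)\neq h(x)] \ge \eps$ by definition, hence $\hat d \ge 7\eps/8 > \eps/2$ and the tester rejects with probability at least $2/3$. Assuming the learner produced by \Thm{UB-hypercube} is proper this is automatic; otherwise one appeals to the standard improper-learner variant of the reduction, which after running the learner performs a brute-force projection of $h$ onto the class of $k$-monotone functions (incurring extra computation but no extra samples). Either way, the total sample complexity is $M + N = \exp(O(\min\{\frac{rk}{\eps}\sqrt{d},d\}))$, as claimed.
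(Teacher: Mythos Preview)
Your proposal is correct and follows exactly the approach the paper intends: the corollary is deduced from \Thm{UB-hypercube} via the testing-by-learning reduction of \Lem{learning->testing}, which indeed handles improper learners by brute-force projecting the hypothesis onto the class before estimating distance. Your description simply unfolds that reduction with slightly different thresholds, but the argument is substantively identical to the paper's.
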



Lastly, we consider the problem of sample-based testing with \emph{one-sided error}. For monotonicity testing of functions $f \colon \{0,1\}^d \to \{0,1\}$ with \emph{non-adaptive queries}, we know that one-sided and two-sided error testers achieve the same query-complexity (up to $\text{polylog}(d,1/\eps)$ factors): there is a $\widetilde{O}(\sqrt{d}/\eps^2)$ one-sided error upper bound due to \cite{KhotMS18} and a $\widetilde{\Omega}(\sqrt{d})$ two-sided error lower bound due to \cite{Chen17}. We show that the situation is quite different for \emph{sample-based} monotonicity testing; while the sample complexity of two-sided error testers is $\exp(\widetilde{\Theta}(\min\{\frac{1}{\eps}\sqrt{d},d\}))$, one-sided error testers require $\exp(\Theta(d))$ samples for all $\eps$. 


\begin{theorem} [Testing with One-Sided Error] \label{thm:one-sided} For every $d,r,k$, and $\eps > 0$, sample-based $k$-monotonicity testing of functions $f \colon \{0,1\}^d \to [r]$ with one-sided error requires $\exp(\Theta(d))$ samples. \end{theorem}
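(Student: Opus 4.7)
The upper bound of $\exp(O(d))$ follows immediately from coupon collector: after $O(d \cdot 2^d)$ uniform samples we observe $f$ on all of $\{0,1\}^d$ with probability at least $2/3$, and can then decide $k$-monotonicity exactly. The structural fact driving the matching lower bound is that any one-sided error tester must deterministically accept whenever the observed labeled sample $\{(x_i, f(x_i))\}_{i=1}^t$ is \emph{consistent} with some $k$-monotone $g \colon \{0,1\}^d \to [r]$, since otherwise running the tester on $g$ itself would yield positive rejection probability, contradicting one-sided error.

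My plan for the lower bound is therefore to fix any $f \colon \{0,1\}^d \to [r]$ that is $\eps$-far from $k$-monotone and argue that a random sample of size $t = \exp(\Omega(d))$ is consistent with a $k$-monotone extension of $f$ with probability at least $2/3$. A sufficient condition for consistency is that the sample set $S = \{x_1,\ldots,x_t\}$ is an antichain in $(\{0,1\}^d, \preceq)$: in that case the function $g(y) \eqdef \max\left(\{1\} \cup \{f(x) : x \in S,\ x \preceq y\}\right)$ is a monotone (hence $k$-monotone) $[r]$-valued extension of $f|_S$, since $S$ being an antichain forces $g(z) = f(z)$ for every $z \in S$, and $g$ is manifestly monotone because enlarging $y$ only enlarges the set being maximized over.

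The antichain probability estimate is a one-line calculation: two independent uniform points $x,y \in \{0,1\}^d$ are comparable with probability $2(3/4)^d - (1/2)^d \leq 2(3/4)^d$, since $\Pr[x_i \leq y_i] = 3/4$ independently across coordinates. A union bound over the $\binom{t}{2}$ pairs shows $S$ contains some comparable pair with probability at most $t^2 (3/4)^d$, which is below $1/3$ whenever $t \leq (4/3)^{d/2}/\sqrt{3} = \exp(\Omega(d))$. Hence any one-sided error tester using fewer than this many samples accepts $f$ with probability $\geq 2/3$, contradicting its obligation to reject the $\eps$-far $f$ with probability $\geq 2/3$.

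The only remaining ingredient is producing an $f \colon \{0,1\}^d \to [r]$ that is $\eps$-far from $k$-monotone; the statement is meaningful only when such $f$ exists, which is the case for any $\eps$ up to a sufficiently small constant (for $\eps \leq 2^{-d}$ one may perturb any $k$-monotone function at a single point, and for larger $\eps$ one can either take a random $f$ and apply a counting argument against the $\exp(O(k \cdot 2^d/\sqrt{d}))$-many $k$-monotone $[r]$-valued functions, or simply use the weight-parity function $f(x) = 1 + (|x| \bmod 2)$ when $k < d$, which alternates along every maximal chain and is $\Omega(1)$-far). I anticipate no real obstacle in this step; the entire content of the argument is the antichain-extension observation combined with the elementary pairwise-comparability estimate.
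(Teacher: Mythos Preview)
Your proposal is correct and follows essentially the same route as the paper: coupon-collector for the upper bound, and for the lower bound the observation that a one-sided tester can reject only when the sample fails to be an antichain, combined with the elementary estimate $\Pr_{x,y}[x \preceq y] = (3/4)^d$ and a union bound over pairs. One cosmetic slip: the paper's convention is $[r] = \{0,1,\ldots,r-1\}$, so in your explicit extension $g(y) = \max(\{1\} \cup \{f(x):x\in S,\ x\preceq y\})$ the constant $1$ should be $0$ (otherwise $g(z)\neq f(z)$ when $f(z)=0$); with that fix your argument is identical in content to the paper's, which simply asserts the extension fact without constructing $g$.
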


\subsubsection{Sample-Based Testing and Learning in Continuous Product Spaces}

Learning $k$-monotone Boolean-valued functions has also been studied over $\RR^d$ with respect to product measures by \cite{HarmsY22} who gave an upper bound of $\exp(\widetilde{O}(\min\{\frac{k}{\eps^2}\sqrt{d},d\}))$ where $\widetilde{O}(\cdot)$ hides polylog factors of $d,k$, and $1/\eps$. Our next result gives an upper bound which improves the dependence on $\eps$ from $1/\eps^2$ to $1/\eps$ in the exponent. By the same approach we used to generalize the upper bound in \Cref{thm:UB-hypercube} to arbitrary $r \geq 2$, we get the same generalization for product spaces. We obtain the following upper bound which matches our lower bound for $\{0,1\}^d$ in \Cref{thm:LB_2sided_samples_rk} up to polylog factors of $d,k,r$, and $1/\eps$. We say that a function $f \colon \RR^d \to [r]$ is \emph{measurable} if the set $f^{-1}(i)$ is measurable for every $i \in [r]$.

\begin{theorem} [Learning Upper Bound for Product Spaces] \label{thm:UB-learner} Given an arbitrary product measure $\mu$, there is a learning algorithm under $\mu$ for measurable $k$-monotone functions $f \colon \RR^d \to [r]$ with time and sample complexity
\[
\exp\left(\widetilde{O}\left(\min\left\{\frac{rk}{\eps} \sqrt{d},d\right\}\right)\right) \text{.}
\]
The $\widetilde{O}(\cdot)$ hides polylogarithmic dependencies on $d,r,k$, and $1/\eps$.
\end{theorem}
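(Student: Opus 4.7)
The plan is to reduce the continuous product-measure problem to the hypercube problem handled by \Thm{UB-hypercube}, following the discretization/down-sampling framework of Harms and Yoshida. The improvement from their $1/\eps^2$ exponent to our $1/\eps$ exponent comes essentially for free by plugging in the new hypercube bound, since the reduction uses the hypercube learner only as a black box. The extension from $r=2$ to arbitrary $r$ is inherited from \Thm{UB-hypercube}, with $r$ appearing only polynomially in the discretization resolution chosen below.

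First I would reduce to $\mu = U[0,1]^d$. Given a product measure $\mu = \mu_1 \times \cdots \times \mu_d$, I apply the univariate CDF $F_j$ of each marginal coordinate-wise to couple $\mu$ with $U[0,1]^d$; since each $F_j$ is non-decreasing, this preserves the coordinate-wise partial order and hence both $k$-monotonicity of $f$ and Hamming distance to $k$-monotone functions. (Atoms of $\mu_j$ can be resolved by uniform randomized tie-breaking.) Next, for a parameter $N = \poly(d,k,r,1/\eps)$ to be tuned, I partition $[0,1]^d$ into $N^d$ cells using $N$ equal-length intervals per coordinate, and let $f^\flat \colon [N]^d \to [r]$ record the majority value of $f$ on each cell. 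Following Harms-Yoshida's down-sampling / coupling analysis, which bounds the expected $\mu$-measure of cells on which $f$ is not essentially constant, I would show that for $N$ a sufficiently large polynomial in $d,k,r,1/\eps$, one has $d_\mu(f, f^\flat) \leq \eps/2$.

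Finally, taking $N$ a power of two, identify $[N]^d$ with $\{0,1\}^{d'}$ for $d' = d \log_2 N$ via coordinate-wise bit decomposition. This is an order embedding, so $f^\flat$ pulls back to a $k$-monotone function on $\{0,1\}^{d'}$, and the composition of cell-rounding followed by bit-decomposition sends a uniform point of $[0,1]^d$ to a uniform point of $\{0,1\}^{d'}$. Feeding this instance into \Thm{UB-hypercube} with error tolerance $\eps/2$ yields sample complexity
\[
\exp\left(O\left(\min\left\{\frac{rk}{\eps}\sqrt{d'},\,d'\right\}\right)\right) \;=\; \exp\left(\widetilde{O}\left(\min\left\{\frac{rk}{\eps}\sqrt{d},\,d\right\}\right)\right),
\]
since $d' = d \cdot \polylog(dkr/\eps)$; by the triangle inequality the output hypothesis then approximates $f$ to within $\eps$. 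The main obstacle is the discretization step---specifically, making the down-sampling bound work for $[r]$-valued (not just Boolean) $f$ with grid resolution $N$ polynomial in the parameters. This can be handled either by adapting the Harms-Yoshida coupling directly, or by threshold-decomposing $f$ into Booleans $f_i = \mbone[f \geq i]$ for $i = 2,\dots,r$, running their Boolean down-sampling on each, and absorbing the resulting factor of $r$ into $N$.
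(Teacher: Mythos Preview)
Your high-level strategy---discretize $\RR^d$ to $[N]^d$ via down-sampling, embed $[N]^d$ into $\{0,1\}^{d\log N}$ by coordinate-wise bit decomposition, then invoke the hypercube learner---is exactly the paper's route. But there is a real gap at the step ``feed this instance into \Thm{UB-hypercube}.'' The only examples you actually hold are pairs $(\mathrm{cell}(x),f(x))$ with $x\sim\mu$; you never get clean labels $f^\flat(\mathrm{cell}(x))$. These agree with the $k$-monotone target $f^\flat$ only on a $1-\eps/2$ fraction of inputs, so what you are handing to the hypercube learner are \emph{mislabeled} examples. \Thm{UB-hypercube} is stated for clean examples and carries no noise-tolerance guarantee, so invoking it as a black box is unjustified. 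The paper addresses this head-on: it formulates the hypergrid learners (\Lem{hypergrid-coupon} and \Lem{hypergrid-cube}) to accept examples from a noisy oracle $EX^\eta$, instantiating the latter via the Low-Degree Algorithm, which Kearns showed tolerates classification noise. To make your argument go through you must open up the hypercube learner and establish noise tolerance rather than use \Thm{UB-hypercube} as a black box.

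A secondary issue: defining $f^\flat$ as the cell \emph{majority} does not obviously preserve $k$-monotonicity for $k\geq 2$ or for $[r]$-valued $f$, so the function you pass to the hypercube learner might not even lie in the concept class. The paper sidesteps this by setting $f^{\block}(z)=f(\blockpoint(z))$ for an order-preserving representative map $\blockpoint\colon[N]^d\to\RR^d$; since $z\preceq z'$ implies $\blockpoint(z)\preceq\blockpoint(z')$, any alternating chain for $f^{\block}$ lifts to one for $f$, and $k$-monotonicity is inherited automatically. Your CDF reduction to $U[0,1]^d$ is a clean simplification (and in fact removes the $\block(\mu)$-versus-uniform discrepancy that the paper patches via \Clm{close}), but you should use an order-preserving representative point rather than a majority vote.
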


We prove \Cref{thm:UB-learner} in \Cref{sec:main-monotone-k}. Once again the testing-by-learning reduction gives us the following corollary for sample-based testing.

\begin{corollary} [Testing Upper Bound for Product Spaces] \label{cor:UB-rk-testing-informal} Given an arbitrary product measure $\mu$, there is a $k$-monotonicity tester for measurable functions $f \colon \RR^d \to [r]$ under $\mu$ with sample complexity
\[
\exp\left(\widetilde{O}\left(\min\left\{\frac{rk}{\eps} \sqrt{d},d\right\}\right)\right) \text{.}
\]
The $\widetilde{O}(\cdot)$ hides polylogarithmic dependencies on $d,r,k$, and $1/\eps$.
\end{corollary}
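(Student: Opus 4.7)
The plan is to derive \Cref{cor:UB-rk-testing-informal} as an immediate consequence of the learning upper bound in \Cref{thm:UB-learner} together with the testing-by-learning reduction (\Cref{lem:learning->testing}, proved in \Cref{sec:testing-by-learning}). Concretely, the tester runs the learner from \Cref{thm:UB-learner} on $N = \exp(\widetilde{O}(\min\{\frac{rk}{\eps}\sqrt{d}, d\}))$ fresh samples from $EX(f,\mu)$ with accuracy parameter $\eps/3$ and success probability $5/6$ to obtain a hypothesis $h$, then draws $O(1/\eps^2)$ additional samples and uses them to estimate $d_\mu(f, h)$ within additive error $\eps/6$ by a Chernoff bound; it accepts iff the estimate is at most $\eps/2$.

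For completeness, if $f$ is $k$-monotone then \Cref{thm:UB-learner} guarantees $d_\mu(f, h) \leq \eps/3$ with probability at least $5/6$, so the empirical estimate is at most $\eps/2$ with probability at least $2/3$ and the tester accepts. For soundness, suppose $f$ is $\eps$-far from $k$-monotone. If the learner were proper then any output $h$ would be $k$-monotone, so $d_\mu(f, h) \geq \eps$ and our estimate would exceed $\eps/2$; if the learner is improper we first ``project'' $h$ to the nearest $k$-monotone function $g$ under $\mu$ (a step affecting time but not sample complexity), so that in the completeness case $d_\mu(f, g) \leq d_\mu(f, h) + d_\mu(h, g) \leq 2 d_\mu(f, h)$ by the defining property of $g$, while in the soundness case $d_\mu(f, g) \geq \eps$ automatically. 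This is exactly the generic reduction stated in the introduction.

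The total sample complexity is $N + O(1/\eps^2)$, and the additive $\poly(1/\eps)$ term is absorbed into the exponent since $\log(1/\eps)$ is already present in the $\polylog(d,r,k,1/\eps)$ factors hidden by $\widetilde{O}(\cdot)$, even in the regime where the minimum is achieved by $d$. All the mathematical content lives in \Cref{thm:UB-learner}; I do not anticipate any real obstacle. The only mild technical point is the improper-learning case handled by the projection step above, and this is already captured by the generic reduction, so the corollary follows formally.
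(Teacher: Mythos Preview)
Your approach is exactly the paper's: the corollary is stated as an immediate consequence of \Cref{thm:UB-learner} via the testing-by-learning reduction of \Cref{lem:learning->testing}, and the paper gives no separate proof beyond pointing to that reduction.

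One small arithmetic slip in your concrete parameters: with learner accuracy $\eps/3$, after projecting an improper hypothesis $h$ to the nearest $k$-monotone $g$ you only get $d_\mu(f,g)\le 2\eps/3$ in the completeness case, and with estimate error $\eps/6$ and threshold $\eps/2$ the tester may wrongly reject (indeed $2\eps/3+\eps/6=5\eps/6>\eps/2$, and this also collides with the soundness estimate $\ge \eps-\eps/6=5\eps/6$). The paper's reduction runs the learner with accuracy $\eps/4$ and uses threshold $3\eps/4$, which cleanly separates $d_\mu(f,g)\le \eps/2$ from $d_\mu(f,g)\ge \eps$; adjusting your constants in the same way fixes the issue without affecting the sample complexity.
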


\subsection{Proof Overviews} 

In this section we give an overview of our proofs for \Cref{thm:LB_2sided_samples_rk} and \Cref{thm:UB-learner}. 

\subsubsection{The Testing Lower Bound for Hypercubes} \label{sec:lower-bound-sketch}

Our proof of \Cref{thm:LB_2sided_samples_rk} uses a family functions known as \emph{Talagrand's random DNFs} introduced by \cite{Tal96} which have been used by \cite{BeBl16} and \cite{Chen17} to prove lower bounds for monotonicity testing of Boolean functions $f \colon \{0,1\}^d \to \{0,1\}$ against adaptive and non-adaptive query-based testers. Very recently, they have also been used to prove lower bounds for tolerant monotonicity testing \cite{CDLNS23} and for testing convexity of sets in $\{-1,0,1\}^d$ \cite{BBH24}. 

To understand our construction, let us first consider the special case of monotonicity of Boolean functions, i.e. $k=1$ and $r=2$. We think of a DNF term as a point $t \in \{0,1\}^d$ which is said to be satisfied by $x \in \{0,1\}^d$ if $t \preceq x$, where $\preceq$ denotes the standard bit-wise partial order over $\{0,1\}^d$. The \emph{width} of a term $t$ is its Hamming weight, $|t|$, and the width of a DNF is the max width among its terms. Consider $N$ randomly chosen terms $t^{1},\ldots,t^{N}$ each of width $|t^j| = w$. We will see later how to choose $N$ and $w$. Let $B := \{x \colon \frac{d}{2} \leq |x| \leq \frac{d}{2} + \eps\sqrt{d}\}$ and for each $j \in [N]$, let 
\[
U_j := \{x \in B \colon t^j \preceq x \text{ and } t^{j'} \not\preceq x \text{ for all } j' \neq j\}
\]
be the set of points in $B$ which satisfy $t^j$ and no other terms. Let $U := \bigcup_{j \in [N]} U_j$. Now observe that any two points lying in different $U_j$'s are \emph{incomparable} and therefore independently embedding an arbitrary monotone function into each $U_j$ will result in a function which globally is monotone if one defines the function outside of $U$ appropriately. Using this fact we can define two distributions $\Dyes$ and $\Dno$ as follows. Let $A$ denote the set of points in $x \in \{0,1\}^d$ for which either $|x| > \frac{d}{2} + \eps \sqrt{d}$ or $x \in B$ and $t^j,t^{j'} \preceq x$ for two different terms $j \neq j'$.

\begin{itemize}
    \item $f \sim \Dyes$ is drawn by setting $f(x) = 1$ if and only if $x \in A \cup \left(\bigcup_{j \in T} U_j\right)$ where $T \subseteq [N]$ contains each $j \in [N]$ with probability $1/2$, independently. Such a function is always monotone.
    \item $f \sim \Dno$ is drawn by setting $f(x) = 1$ if and only if $x \in A \cup R$ where $R$ contains each $x \in U$ with probability $1/2$, independently. Such a function will be $\Omega(|U| \cdot 2^{-d})$-far from monotone with probability $\Omega(1)$ since its restriction with $U$ is uniformly random.
\end{itemize}
Now, each $x \in U$ satisfies $\pr_{f \sim \Dyes}[f(x) = 1] = \pr_{f \sim \Dno}[f(x) = 1] = 1/2$ and for both distributions the events $f(x) = 1$ and $f(y) = 1$ are independent when $x,y$ lie in different $U_j$'s. Therefore, any tester will need to see at least two points from the same $U_j$ to distinguish $\Dyes$ and $\Dno$. Roughly speaking, by birthday paradox this gives a $\Omega(\sqrt{N})$ lower bound on the number of samples. The lower bound is thus determined by the maximum number of terms $N$ that can be used in the construction for which $|U| = \Omega(\eps 2^d)$.

So how are $N$ and $w$ chosen? By standard concentration bounds, we have $|B| = \Omega(\eps 2^d)$ and observe that a point $x \in B$ satisfies a random term with probability exactly $(|x|/d)^w$. We need $U$ to contain a \emph{constant fraction} of $B$, i.e. we need $x$ to satisfy exactly $1$ term with constant probability. The expected number of satisfied terms is $N \cdot (|x|/d)^w$ and, roughly speaking, we need this value to be $\Theta(1)$ for all $x \in B$. Applying this constraint to the case when $|x| = d/2$ forces us to pick $N \approx 2^w$. Now when $|x| = d/2 + \eps\sqrt{d}$, the expected number of satisfied terms is $N \cdot 2^{-w} \cdot (1+2\eps/\sqrt{d})^w \approx (1+2\eps/\sqrt{d})^w$ and we are forced to choose $w \approx \sqrt{d}/\eps$. The lower bound for sample-based monotonicity testing of $f \colon \{0,1\}^d \to \{0,1\}$ is then $\Omega(\sqrt{N}) \approx \exp(\Omega(\sqrt{d}/\eps))$.

Let us now think about generalizing this construction to testing $k$-monotonicity of functions $f \colon \{0,1\}^d \to [r]$. The moral of the above argument is that the permitted number of terms is controlled by the number of distinct Hamming weights in the set $B$. We observe that for larger values of $k$ and $r$ we can partition $B$ into $k(r-1)$ blocks as $B := B_1 \cup B_2 \cup \cdots \cup B_{k(r-1)}$ each with a window of Hamming weights of size only $\frac{\eps\sqrt{d}}{k(r-1)}$. We are able to essentially repeat the above construction independently within each block wherein we can set $w \approx \frac{k(r-1)\sqrt{d}}{\eps}$ and consequently $N \approx 2^{\frac{k(r-1)\sqrt{d}}{\eps}}$.

For each block $i \in [k(r-1)]$, the random Talagrand DNF within block $B_i$ is defined analogously to the above construction, except that it assigns function values from $\{i \bmod{(r-1)}, i \bmod{(r-1)} + 1\}$, instead of $\{0,1\}$. 
Since there are $k(r-1)$ blocks in total, the distribution $\Dyes$ only produces $k$-monotone functions. At the same time, a function $f \sim \Dno$ assigns uniform random $\{a,a+1\}$ values within each block $B_{m(r-1) + a}$. This results in a large number of long chains through $B_{a} \cup B_{(r-1) + a} \cup \cdots \cup B_{(k-1)(r-1)+a}$ which alternate between function value $a$ and $a+1$. Considering the union of all such chains for $a = 0,1,\ldots,r-2$ shows that $f$ is $\Omega(\eps)$-far from $k$-monotone with probability $\Omega(1)$.


\subsubsection{The Learning Upper Bound for Product Spaces} \label{sec:sketch}

As we discussed in \Cref{sec:results}, it suffices to prove \Cref{thm:UB-learner} for the case of $r=2$, i.e. learning functions $f \colon \RR^d \to \{\pm 1\}$ under a product measure $\mu$. We use a downsampling technique to reduce this problem to learning a discretized proxy of $f$ over a hypergrid $[N]^d$ where $N = \Theta(kd/\eps)$ under agnostic label noise at rate $O(\eps)$. This technique has been used in previous works \cite{GKW19,BlackCS20,HarmsY22} and our proof borrows many technical details from \cite{HarmsY22}.

Next, for $N$ which is a power of $2$, we observe that a $k$-monotone function $f \colon [N]^d \to \{\pm 1\}$ can be viewed as a $k$-monotone function over the hypercube $\{\pm 1\}^{d \log N}$ by mapping each point $x \in [N]^d$ to its bit-representation. We can then leverage a result of \cite{BlaisCOST15} which shows that all but a $\eps$-fraction of the mass of the Fourier coefficients of $k$-monotone Boolean functions $f \colon \{0,1\}^d \to \{0,1\}$ is concentrated on the terms with degree at most $\frac{k\sqrt{d}}{\eps}$. We can then use the low-degree algorithm introduced by \cite{LinialMN93} which was shown to work even in the agnostic setting by \cite{agnostichalfspaces08}.

\subsection{Discussion and Open Questions}

Our results for sample-based testing and learning over the hypercube are tight up to a $\log d$ factor in the exponent. Our upper bound for product spaces matches the lower bound for hypercubes only up to \emph{polylog} factors of $d,k,r,1/\eps$ in the exponent. In particular, the upper bound for product spaces goes to $\infty$ as any one of the parameters $r$, $k$, or $1/\eps$ grow to $\infty$, whereas the lower bound for the hypercube can be at most $\exp(\Theta(d))$ simply because $|\{0,1\}^d| = 2^d$ and so any function $f \colon \{0,1\}^d \to \RR$ can be learned \emph{exactly} with $\exp(O(d))$ samples. It seems intuitive that sample-based testing and learning of $k$-monotone functions over $[n]^d$ should require $n^{\Omega(d)}$ samples as either of the parameters $k$ or $r$ approaches $\infty$. A corollary of such a result would be that the sample-complexity of these problems for $f \colon \RR^d \to [r]$ grow to $\infty$ as $k$ or $r$ approach $\infty$. Moreover, if this is true, then $k$-monotonicity of functions $f \colon \RR^d \to \RR$ is not testable with a finite number of samples. Our results do not address this and it would be interesting to investigate this further.

\begin{question} Is there a lower bound for sample-based $k$-monotonicity testing of functions $f \colon [n]^d \to [r]$ which approaches $n^{\Omega(d)}$ as $r$ or $k$ go to $\infty$? \end{question}

\subsection{Related Work} \label{sec:related}


Monotone functions and their generalization to $k$-monotone functions have been extensively studied within property testing and learning theory over the last 25 years. We highlight some of the results which are most relevant to our work. Afterwards, we discuss some selected works on sample-based property testing.

\paragraph{Sample-based monotonicity testing:} Sample-based monotonicity testing of Boolean functions over the hypercube, $\{0,1\}^d$, was considered by \cite{GGLRS00} (see \cite[Theorems 5 and 6]{GGLRS00}) who gave an upper bound of $O(\sqrt{2^d/\eps})$ and a lower bound of $\Omega(\sqrt{2^d/\eps})$ for $\eps = O(d^{-3/2})$. Sample-based monotonicity testing over general partial orders was studied by \cite{FLNRRS02} who gave a $O(\sqrt{N/\eps})$ one-sided error tester for functions $f \colon D \to \RR$ where $D$ is any partial order on $N$ elements. Sample-based monotonicity testing of functions on the line $f \colon [n] \to [r]$ was studied by \cite{PallavoorRV18} who gave a one-sided error upper bound of $O(\sqrt{r/\eps})$ and a matching lower bound of $\Omega(\sqrt{r})$ for all sample-based testers.  

\paragraph{Query-based monotonicity testing:} Monotonicity testing has been extensively studied in the standard query model \cite{chen2025boolean,Ras99,EKK+00,GGLRS00,DGLRRS99,LR01,FLNRRS02,HK03,AC04,HK04,ACCL04,E04,SS08,Bha08,BCG+10,FR,BBM11,RRSW11,BGJ+12,ChSe13,ChSe13-j,ChenST14,BeRaYa14,BlRY14,ChenDST15,ChDi+15,KMS15,BeBl16,Chen17,BlackCS18,PallavoorRV18,BlackCS20,HarmsY22,BKR23,BKKM23,BCS23stoc,BCS23focs,CDLNS23}. 
When discussing these works we treat $\eps$ as a small constant for brevity. For $f \colon \{0,1\}^d \to \{0,1\}$, the non-adaptive query complexity has been established at $
\widetilde{\Theta}(\sqrt{d})$ \cite{KhotMS18,Chen17} with a (very recent) nearly-matching adaptive lower bound of $\Omega(d^{1/2-c})$, for any constant $c > 0$ \cite{chen2025boolean}. 
For $f \colon [n]^d \to \{0,1\}$ and $f \colon \RR^d \to \{0,1\}$ under product measures, a recent result of \cite{BCS23focs} established a non-adaptive upper bound of $d^{1/2+o(1)}$. For functions $f \colon \{0,1\}^d \to [r]$, \cite{BKR23} showed upper and lower bounds of $\widetilde{\Theta}(\min(r\sqrt{d},d))$ for non-adaptive, one-sided error testers and there is a general (adaptive) lower bound of $\Omega(\min(d,r^2))$ due to \cite{BBM11}. For real-valued functions $f \colon [n]^d \to \RR$, the query complexity is known to be $\Theta(d \log n)$. The upper bound is non-adaptive \cite{ChSe13} and the lower bound holds even for adaptive testers \cite{ChSe14}.

\paragraph{$k$-Monotonicity testing:} The generalization to $k$-monotonicity testing has also been studied in the standard query model by \cite{GKW19,CGG0W19}. These works show that the query-complexity of non-adaptive one-sided error $k$-monotonicity testing is $\exp(\widetilde{\Theta}(\sqrt{d}))$ for all $k \geq 2$, demonstrating an interesting separation between (1-)monotonicity and 2-monotonicity.

\paragraph{Learning monotone functions:} Monotone Boolean functions $f \colon \{0,1\}^d \to \{0,1\}$ were studied in the context of learning theory by \cite{BshoutyT96} who showed that they can be (improperly) learned to error $\eps$ under the uniform distribution with $\exp(\widetilde{O}(\frac{1}{\eps}\sqrt{d}))$ time and samples. Very recent works \cite{LRV22,LV23} have given \emph{agnostic proper} learning algorithms with the same complexity.

\paragraph{Learning $k$-monotone functions:} The result of \cite{BshoutyT96} was generalized by \cite{BlaisCOST15} who gave upper and lower bounds of $\exp(\widetilde{\Theta}(\frac{k}{\eps}\sqrt{d}))$ for learning $k$-monotone Boolean functions $f \colon \{0,1\}^d \to \{0,1\}$. For Boolean functions over hypergrids $f \colon [n]^d \to \{0,1\}$, \cite{CGG0W19} gave an upper bound of $\exp(\widetilde{O}(\min(\frac{k}{\eps^2}\sqrt{d},d)))$ where $\widetilde{O}(\cdot)$ hides polylog factors of $d,k,1/\eps$. This result was generalized to functions $f \colon \RR^d \to \{0,1\}$ under product measures by \cite{HarmsY22}.

\paragraph{Sample-based property testing:} The notion of sample-based property testing was first presented and briefly studied by \cite{GGR98}. Broader studies of sample-based testing and its relationship with query-based testing have since been given by \cite{FischerGL14, FischerLV15,GR16}. A characterization of properties which are testable with a constant number of samples was given by \cite{BlaisY19}.

As we mentioned, sample-based algorithms are the standard model in learning theory, and learning requires at least as many samples as testing for every class of functions. Thus, it is natural to ask, when is testing \emph{easier} than learning in terms of sample complexity? This question is referred to as \emph{testing vs learning} and has been studied by \cite{KR00} and more recently by \cite{BFH21,FH23,FH24}. 

There has also been work studying models that interpolate between query-based and sample-based testers. For instance, \cite{BBBY12} introduced the notion of \emph{active testing}, where the tester may make queries, but only on points from a polynomial-sized batch of unlabeled samples drawn from the underlying distribution. This was inspired by the notion of \emph{active learning} which considers learning problems under this access model.

Sample-based convexity testing of sets over various domains has also seen some recent attention \cite{ChenFSS17,BermanMR19b,BermanMR19a,BBH24}.

\subsection{Learning Functions with Bounded Image Size: Proof of \Cref{thm:UB-hypercube}} \label{sec:UB-hypercube-proof}

In this section we give a short proof showing that the learning algorithm of \cite{BlaisCOST15} can be extended in a relatively straightforward manner to functions $f \colon \{0,1\}^d \to [r]$ by increasing the sample-complexity by a factor of $r$ in the exponent. 

\begin{proof}[Proof of \Cref{thm:UB-hypercube}] \cite[Theorem 1.4]{BlaisCOST15} proved this result for the case of $r=2$. In particular, they show that there is a sample-based learning algorithm which given an arbitrary $k$-monotone Boolean function $f$, outputs $h$ such that $\Pr_h[d(f,h) > \eps] < \delta$ using $\ln(1/\delta) \cdot \exp\big(O\big(\min\big\{\frac{rk}{\eps} \sqrt{d} \log d,d\big\}\big)\big)$ queries\footnote{Their result (Thm 1.4 of \cite{BlaisCOST15}) is stated for constant $\delta$, but can be easily extended to arbitrary $\delta$ with the stated query complexity by replacing Thm 3.1 in their proof with the low-degree algorithm stated for general $\delta$.} to the example oracle, $EX(f)$. We will make use of this result.

For each $t \in [r]$, let $f_t \colon \{0,1\}^d \to \{0,1\}$ denote the thresholded Boolean function defined as $f_t(x) := \mathbf{1}(f(x) \geq t)$. Observe that for all $x \in \{0,1\}^d$ we have $f(x) = \text{argmax}_t \{ f_t(x) = 1 \}$, and if $f$ is $k$-monotone, then so is $f_t$ for every $t \in [r]$. Thus, for each $t \in [r]$, run the learning algorithm of \cite{BlaisCOST15} with error parameters set to $\eps' := \eps/r$ and $\delta = 1/3r$ to obtain a hypothesis $h_t$. We have $\pr[d(h_t,f_t) > \eps/r] < 1/3r$. By a union bound, with probability at least $2/3$, every $t \in [r]$ satisfies $d(h_t,f_t) \leq \eps/r$. Moreover, if this holds then by another union bound we have $\pr_x[\exists t \in [r] \colon h_t(x) \neq f_t(x)] \leq \eps$. Thus, the hypothesis $h(x) := \text{argmax}_t \{ h_t(x) = 1 \}$ satisfies $d(h,f) \leq \eps$. The number of samples used is $\ln(1/\delta) \cdot \exp(O(\min\{\frac{k}{\eps'} \sqrt{d} \log d,d\})) = \exp(O(\min\{\frac{rk}{\eps} \sqrt{d} \log d,d\}))$ and this completes the proof. \end{proof}

\section{Preliminaries on $k$-Monotonicity}

We use the notation $[n] := \{0,1,\ldots,n-1\}$. Throughout the paper, we use the convention that random variables are denoted by bold symbols.

\begin{definition} \label{def:alt-chains} Given a poset $\cP = (\cX,\preceq)$ and a function $f \colon \cX \to \mathbb{R}$, an $m$-alternating chain is a sequence of points $x_1 \prec x_2 \prec \cdots \prec x_{m}$ such that for all $i \in \{1,\ldots,m-1\}$,
\begin{enumerate}
    \item $f(x_{i+1}) - f(x_i) < 0$ when $i$ is odd, and
    \item $f(x_{i+1}) - f(x_i) > 0$ when $i$ is even.
\end{enumerate}
\end{definition}

\begin{definition} [$k$-monotonicity] \label{def:k-mono} For a poset $\cP = (\cX,\preceq)$, a function $f \colon \cX \to \mathbb{R}$ is called $k$-monotone if it does not have any $(k+1)$-alternating chains.
\end{definition}

Let $\cM_{\cP,k}$ denote the set of all $k$-monotone functions $f \colon \cX \to \mathbb{R}$ over the poset $\cP = (\cX,\preceq)$. The Hamming distance between two functions $f,g \colon \cX \to \mathbb{R}$ is $d(f,g) = |\cX|^{-1} \cdot |\{x \in \cX \colon f(x) \neq g(x)\}|$. The distance to $k$-monotonicity of $f$ is denoted by $\eps(f,\cM_{\cP,k}) := \min_{g \in \cM_{\cP,k}} d(f,g)$. The following claim is our main tool for lower bounding the distance to $k$-monotonicity.

\begin{claim} \label{clm:far} Let $f \colon \cX \to \mathbb{R}$ and $k' \geq 3k$ be an integer. Let $\cC \subset \cX^{k'}$ be a collection of disjoint $k'$-alternating chains for $f$. Then 
\[
    \eps(f,\cM_{\cP,k}) \geq \frac{1}{3|\cX|} \cdot \left| \bigcup_{C \in \cC} C \right| \text{.}
\]
\end{claim}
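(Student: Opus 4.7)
The plan is to prove the stronger per-chain statement: for any $g \in \cM_{\cP,k}$ and any chain $C = (x_1 \prec \cdots \prec x_{k'}) \in \cC$, the number of indices $i$ with $g(x_i) \neq f(x_i)$ is at least $k'/3$. Summing this over the disjoint chains in $\cC$ yields $|\{x \in \cX : g(x) \neq f(x)\}| \geq |\bigcup_{C \in \cC} C|/3$, and hence $d(f,g) \geq |\bigcup_{C \in \cC} C|/(3|\cX|)$ for every $k$-monotone $g$, which is exactly the claim.

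Fix such a chain $C$ and $g$, and let $A = \{i : g(x_i) = f(x_i)\}$ with $r = k' - |A|$. Since $x_1 \prec \cdots \prec x_{k'}$ is a totally ordered chain in $\cP$, any strictly alternating subsequence of $(g(x_1), \ldots, g(x_{k'}))$ yields an alternating chain for $g$; because $g$ is $k$-monotone, the longest alternating subsequence (LAS) of this value sequence must be at most $k$. Restricting to $A$, the subsequence $(g(x_i))_{i \in A} = (f(x_i))_{i \in A}$ is a length-$|A|$ subsequence of the fully alternating sequence $(f(x_i))_{i=1}^{k'}$, and its LAS is also at most $k$.

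The combinatorial heart of the proof is the following sub-lemma: for any sequence of distinct reals, deleting a single entry decreases the LAS by at most $2$. I would prove it via the identity (valid for distinct reals of length at least $2$) that LAS equals (the number of sign changes in the difference sequence) $+\, 2$. Deleting an entry $b_i$ replaces the two consecutive differences $\sigma_{i-1}, \sigma_i$ by the single difference $\sigma' = \sgn(b_{i+1} - b_{i-1})$. A short case analysis on the four signs $\sigma_{i-2}, \sigma_{i-1}, \sigma_i, \sigma_{i+1}$ before versus the three signs $\sigma_{i-2}, \sigma', \sigma_{i+1}$ after — respecting the parity constraint that both local sign-change counts share the parity of $[\sigma_{i-2} \neq \sigma_{i+1}]$ — shows the local count drops by at most $2$, hence LAS drops by at most $2$. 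Iterating $r$ times starting from the alternating sequence (LAS $= k'$), any subsequence obtained by $r$ deletions has LAS $\geq k' - 2r$. Applied to our agreement subsequence gives $k \geq k' - 2r$, so $r \geq (k'-k)/2 \geq k'/3$ by the hypothesis $k' \geq 3k$, which is the per-chain bound.

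The main obstacle I expect is the crisp verification of the sub-lemma, specifically the ``at most $2$ sign changes lost per single deletion'' step: the bound is tight (achieved by deletions within a fully alternating window), and its proof is a short but slightly fiddly parity/case check. Once established, the sub-lemma feeds directly into the averaging over disjoint chains to complete the proof.
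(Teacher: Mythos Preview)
Your proposal is correct and takes essentially the same route as the paper: both prove the per-chain bound $|\{i:f(x_i)\neq g(x_i)\}|\ge (k'-k)/2\ge k'/3$ via an ``each single modification costs at most $2$'' argument and then sum over the disjoint chains, the only difference being that the paper counts sign changes in the prepended difference sequence under a \emph{value change} $f(x_i)\to g(x_i)$ while you count the longest alternating subsequence under a \emph{deletion}. One minor slip: your sub-lemma and its proposed sign-change proof are stated for distinct reals, but iterated deletions can create equal consecutive entries; the cleanest fix is the direct argument that removing one element from a length-$L$ alternating subsequence always leaves one of length $\ge L-2$, which needs no distinctness assumption.
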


\begin{proof} Observe that every $k$-monotone function $g \in \cM_{\cP,k}$ has the following property: for every $C = (x_1, x_2, \ldots, x_{k'}) \in \cC$, the sequence 
\[
\big(1, g(x_2) - g(x_1), g(x_3) - g(x_2), \ldots, g(x_{k'}) - g(x_{k'-1}) \big)
\]
changes sign at most $k-1$ times, whereas the sequence 
\[
\big(1, f(x_2) - f(x_1), f(x_3) - f(x_2), \ldots, f(x_{k'}) - f(x_{k'-1}) \big)
\]
changes sign exactly $k' - 1$ times. We have prepended a $1$ so that the first sign change occurs as soon as the function value decreases. Now, changing $f(x_i)$ can only reduce the number of times the sequence changes sign by at most $2$ and so $|\{i \colon f(x_i) \neq g(x_i)\}| \geq \frac{k'-k}{2}$. Summing over all chains in $\cC$ and normalizing yields 
\[
d(f,g) \geq \frac{k'-k}{2} \cdot \frac{|\cC|}{|\cX|} \geq \frac{k'}{3} \cdot \frac{|\cC|}{|\cX|} \geq \frac{1}{3|\cX|} \cdot \left| \bigcup_{C \in \cC} C \right|
\]
where the second inequality follows from $k \leq k'/3$ and the third inequality is due to the fact that the chains in $\cC$ are all disjoint and each of size $k'$. This completes the proof since this inequality holds for all $g \in \cM_{\cP,k}$. \end{proof}

We use the notation $\mrk$ to denote the set of all $k$-monotone functions $f \colon \{0,1\}^d \to [r]$ over the hypercube whose image has at most $r$ distinct values.

\section{Lower Bound for Sample-Based Testers} \label{sec:LB}

In this section we prove \Cref{thm:LB_2sided_samples_rk}, our lower bound on the sample-complexity of testing $k$-monotonicity of functions $f \colon \{0,1\}^d \to [r]$. We refer the reader to \Cref{sec:lower-bound-sketch} for a discussion of our main ideas and a proof sketch for the special case of $k=1$ and $r=2$, i.e. \emph{monotone Boolean} functions. Our proof follows the standard approach of defining a pair of distributions $\Dyes,\Dno$ over functions $f \colon \{0,1\}^d \to [r]$ which satisfy the following:

\begin{itemize}
    \item $\Dyes$ is supported over $k$-monotone functions.
    \item Functions drawn from $\Dno$ are typically $\Omega(\eps)$-far from $k$-monotone: $\pr_{f \sim \Dno}[\eps(f, \cM_{r,k}) = \Omega(\eps)] = \Omega(1)$. 
    \item The distributions over labeled examples from $\Dyes$ and $\Dno$ are close in TV-distance.
\end{itemize}

Our construction uses a generalized version of a family of functions known as random Talagrand DNFs, which were used by \cite{BeBl16} and \cite{Chen17} to prove lower bounds for testing monotonicity of Boolean functions with adaptive and non-adaptive queries. 

Let $r,k$ satisfy $rk \leq \frac{\eps \sqrt{d}}{24300}$. (Recall our goal is to prove the lower bound $\exp(\Omega(\min\{\frac{rk}{\eps}\sqrt{d},d\}))$ and so it suffices to restrict our attention to such values of $r,k$.) For convenience, we will assume that $\frac{k(r-1)}{2\eps}$ and $\sqrt{d}$ are integers and that $\frac{k(r-1)}{\eps}$ divides $\sqrt{d}$. Let $L_{\ell} := \left\{x \in \{0,1\}^d \colon |x| = \ell\right\}$ denote the $\ell$'th Hamming level of the hypercube. We partition $\bigcup_{\ell \in [0,\eps\sqrt{d})} L_{d/2 + \ell}$ into $k(r-1)$ blocks as follows. For each $i \in [k(r-1)]$, define
\[
B_i = \bigcup_{\ell = i \cdot \frac{\eps\sqrt{d}}{k(r-1)}}^{(i+1) \cdot \frac{\eps\sqrt{d}}{k(r-1)} - 1} L_{\frac{d}{2} + \ell}\text{.}
\]
The idea of our proof is to define a random DNF within each $B_i$. The \emph{width} of each DNF will be set\footnote{Since $\frac{k(r-1)}{2\eps}$ and $\sqrt{d}$ are integers, $w$ is also an integer.} to $w := \frac{(r-1)k\sqrt{d}}{2\eps}$ and for each $i$, the number of terms in the DNF within $B_i$ will be set to $N_i := 2^w \cdot e^{-i} = 2^{\frac{(r-1)k\sqrt{d}}{2\eps}(1-o(1))}$. The DNF defined over $B_i$ will assign function values from $\{ i \bmod{(r-1)}, i \bmod{(r-1)} +1\}$. 
The terms in each DNF will be chosen randomly from the following distribution. We think of terms as points $t \in \{0,1\}^d$ in the hypercube where another point $x$ \emph{satisfies} $t$ if $t \preceq x$, i.e. $t_i = 1$ implies $x_i = 1$. 

\begin{definition} [Term distribution] \label{def:term} A term $t \in \{0,1\}^d$ is sampled from the distribution $\cD_{\texttt{term}}$ as follows. Form a (multi)-set $S \subseteq [d]$ by choosing $w$ independent uniform samples from $[d]$. For each $a \in [d]$,  let $t_a := \mathbf{1}(a \in S)$. \end{definition}

\subsection{The Distributions $\Dyes$ and $\Dno$}


We now define the yes and no distributions over functions $f \colon \{0,1\}^d \to [r]$. For each $i \in [k(r-1)]$, choose terms $t^{i,1},\ldots,t^{i,N_i}$ i.i.d. from $\cD_{\texttt{term}}$ and let $\pmb{t} = \{t^{i,j} \colon i \in [k(r-1)]\text{, }j \in [N_i]\}$ denote the random set of all terms. Now, for each $i \in [k(r-1)]$ and $j \in [N_i]$, define the set
\begin{align} \label{eq:unique_buckets}
    U_{i,j} = \left\{x \in B_i \colon x \succeq t^{i,j} \text{ and } x \not\succeq t^{i,j'} \text{ for all } j' \neq j\right\}
\end{align}
of all points in the $i$'th block that satisfy the $j$'th term \emph{uniquely}. Let $U_i = \bigcup_{j \in [N_i]} U_{i,j}$ denote the set of points in $B_i$ that satisfy a unique term. The following claim is key to our result and motivates our choice of $w$ and $N_i$. We defer its proof to \Cref{sec:proof-of-highprob}.

\begin{claim} \label{clm:highprob_vertex} For any $i \in [k(r-1)]$, $j \in [N_i]$, and $x \in B_i$, we have 
\[
\frac{1}{45N_i} \leq \Pr_{\bt}[x \in U_{i,j}] \leq \frac{3}{N_i}\text{.}
\]
As a corollary, we have $\Pr_{\bt}[x \in U_{i}] \geq 1/45$. \end{claim}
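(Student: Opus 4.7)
The plan is to compute $\Pr_{\bt}[x \in U_{i,j}]$ exactly in terms of a single scalar $p := \Pr_{t \sim \cD_{\texttt{term}}}[t \preceq x]$, show that the mean $\lambda := N_i p$ of the number of terms satisfied by $x$ sits in a constant-sized window regardless of which $i$ and which $x \in B_i$ we pick, and then read off the two-sided bound from the standard estimate $\Pr_{\bt}[x \in U_{i,j}] \approx (\lambda/N_i)\,e^{-\lambda}$.

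First I would note that a single term $t \sim \cD_{\texttt{term}}$ satisfies $t \preceq x$ exactly when all $w$ coordinates chosen to form $t$ land inside $\supp(x)$, so $p = (|x|/d)^w$. Because the $N_i$ terms in block $i$ are i.i.d., the events $\{t^{i,j} \preceq x\}_j$ are mutually independent Bernoullis with parameter $p$, and so $\Pr_{\bt}[x \in U_{i,j}] = p(1-p)^{N_i - 1}$ is an exact identity rather than an approximation.

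The heart of the argument is the calibration of $\lambda = N_i p = e^{-i}(1 + 2\ell/d)^w$, where $\ell := |x| - d/2$. By the definition of the blocks $B_i$, $\ell$ lies in an interval tuned so that $w \cdot (2\ell/d) \in [i, i+1)$. The upper bound $\lambda \leq e$ then falls out of $(1+x)^w \leq e^{wx}$. For the matching lower bound I would use the inequality $\log(1+x) \geq x - x^2/2$ on the small $x = 2\ell/d$ arising here; the role of the hypothesis $rk \leq \eps\sqrt{d}/4800$ is precisely to force the correction $wx^2/2$ to be a tiny constant (say at most $1/4800$), so that $\lambda$ is pinned from below by a constant very close to $1$.

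Once $\lambda$ is trapped in a constant-sized window, $p = \lambda/N_i$ is doubly-exponentially small, so $(1-p)^{N_i - 1}$ differs from $e^{-\lambda}$ by a negligible multiplicative factor via $\log(1-p) \in [-p-p^2,\, -p]$, and the map $\lambda \mapsto \lambda e^{-\lambda}$ lies safely inside $[1/20, 3]$ on the window in question, yielding the claimed two-sided bound. The corollary is then immediate: the sets $U_{i,1}, \ldots, U_{i,N_i}$ are pairwise disjoint by construction, so $\Pr_{\bt}[x \in U_i] = \sum_j \Pr_{\bt}[x \in U_{i,j}] \geq N_i \cdot \tfrac{1}{20 N_i} = 1/20$. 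The only step that needs genuine care is the calibration in the third paragraph: one must verify that the parameter choices $w = (r-1)k\sqrt{d}/(2\eps)$ and $N_i = 2^w e^{-i}$ combine with the bound on $rk$ to keep the second-order error terms uniformly negligible across every $i$ and every $x \in B_i$; all the other steps are essentially bookkeeping.
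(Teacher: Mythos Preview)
Your proposal is correct and follows essentially the same approach as the paper. Both arguments compute $\Pr_{\bt}[x \in U_{i,j}] = p(1-p)^{N_i-1}$ with $p = (|x|/d)^w$, then show that the calibration of $w$ and $N_i$ against the block width forces $N_i p$ into a constant window; your framing in terms of the mean $\lambda = N_i p$ and the map $\lambda \mapsto \lambda e^{-\lambda}$ is a clean repackaging of what the paper writes as $(|x|/d)^w \in [e^{-o(1)}/N_i,\, e^{1+o(1)}/N_i]$, and your explicit use of $\log(1+x) \geq x - x^2/2$ together with $rk \leq \eps\sqrt{d}/4800$ is exactly what lies behind the paper's $o(1)$ terms.
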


Functions drawn from $\Dyes$ are generated as follows. For each $i \in [k(r-1)]$ choose a uniform random assignment 
\[
\pmb{\phi}_i \colon [N_i] \to \left\{i \bmod (r-1), i \bmod (r-1) + 1\right\} \text{ and let } \pmb{\phi} = (\pmb{\phi}_i \colon i \in [k(r-1)]) \text{.}
\]
For every $x \in B_i$ define
\[
    f_{\pmb{t},\pmb{\phi}}(x) = \begin{cases}
        i \bmod (r-1), & \text{if } \forall j \in [N_i] \text{, } x \not\succeq t^{i,j} \\
        i \bmod (r-1)+1, & \text{if } \exists j \neq j' \in [N_i]\text{, } x \succeq t^{i,j},t^{i,j'} \\
        \pmb{\phi}_i(j), & \text{if } x \in U_{i,j}\text{.}
        \end{cases}
\]

Functions drawn $\Dno$ are generated as follows. For each $i \in [k(r-1)]$ choose a uniform random function 
\[
\pmb{r}_i \colon U_i \to \left\{i \bmod (r-1),i \bmod (r-1)+1\right\} \text{ and let } \pmb{r} = (\pmb{r}_i \colon i \in [k(r-1)]) \text{.} 
\]
For each $x \in B_i$ define

\[
    f_{\pmb{t},\pmb{r}}(x) = \begin{cases}
       i \bmod (r-1), & \text{if } \forall j \in [N_i] \text{, } x \not\succeq t^{i,j} \\
        i \bmod (r-1)+1, & \text{if } \exists j \neq j' \in [N_i]\text{, } x \succeq t^{i,j},t^{i,j'}\\
       \pmb{r}_i(x), & \text{if } x \in U_i\text{.}
        \end{cases}
\]
For $x$ not belonging to any $B_i$: if $|x| < \frac{d}{2}$, then both the yes and no distributions assign value $0$ and if $|x| \geq \frac{d}{2} + \eps\sqrt{d}$, then both the yes and no distributions assign value $r-1$. \\

In summary, a function $f_{\pmb{t},\pmb{\phi}} \sim \Dyes$ assigns the same random value $\pmb{\phi}_i(j) \in \{i \bmod (r-1),i \bmod (r-1) +1\}$ to all points in $U_{i,j}$, which results in a $k$-monotone function, whereas a function $f_{\pmb{t},\pmb{r}} \sim \Dno$ assigns an i.i.d. uniform random $\{i \bmod (r-1), i \bmod (r-1) +1\}$-value to each point in $U_i$, resulting in a function that is far from being $k$-monotone. By construction, to detect any difference between these cases a tester will need to sample at least two points from the same $U_{i,j}$. \Cref{thm:LB_2sided_samples_rk} follows immediately from the following three lemmas.

\begin{lemma} \label{lem:yes} Every function in the support of $\cD_{yes}$ is $k$-monotone. \end{lemma}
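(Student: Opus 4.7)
The plan is to show every $f = f_{\pmb{t},\pmb{\phi}}$ in the support of $\Dyes$ is $k$-monotone by exploiting the ``round'' structure of the $k(r-1)$ blocks. I would group the blocks into $k$ rounds $R_m = \bigcup_{a=0}^{r-2} B_{m(r-1)+a}$ for $m \in \{0,\ldots,k-1\}$, so that within a single round the block value-ranges are $\{0,1\}, \{1,2\}, \ldots, \{r-2,r-1\}$ in Hamming-weight order. The central intermediate claim is that $f$ is non-decreasing along every chain lying inside a single round, so that any strict decrease along an arbitrary chain must occur at a round boundary.

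The within-round claim proceeds in two stages. First, within-block monotonicity: for $x \prec y \in B_i$, every term $t^{i,j}$ satisfied by $x$ is automatically satisfied by $y$, and a short case analysis on whether $x$ lies in the zero-terms, unique-term, or multi-term region of $B_i$ yields $f(y) \ge f(x)$. Second, same-round across-block monotonicity: if $x \in B_i, y \in B_j$ with $i<j$ both in $R_m$, then writing $a_i = i \bmod (r-1)$ one has $a_j = a_i + (j-i) \ge a_i + 1$, so $f(y) \ge a_j \ge a_i + 1 \ge f(x)$. The boundary regions also cooperate: the ``below'' region (value $0$) feeds non-decreasingly into round $0$, and round $k-1$ feeds non-decreasingly into the ``above'' region (value $r-1$).

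It remains to rule out a $(k+1)$-alternating chain $x_1 \prec \cdots \prec x_{k+1}$ with sign pattern $(-,+,-,+,\ldots)$. Every ``$-$'' edge must be a round transition (since within-round the function is non-decreasing), and every within-round edge, which must carry a strict $+$, can occupy only an \emph{even} position of the pattern. Combining these two constraints with the $\lceil k/2 \rceil$ odd positions that must host a round-transition ``$-$'' forces the chain to visit at least $\lceil k/2 \rceil + 1$ distinct rounds.

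The step I expect to be the main obstacle is turning this counting into a genuine contradiction. The feasibility inequality $\lceil k/2 \rceil + 1 \le k$ is satisfied for every $k \ge 2$, so a purely numerical bound on decreases vs.\ available round transitions will not close the argument; the proof will need a finer positional analysis of which chain positions can host round transitions and of how the values at the entry and exit points of each round-segment must chain together. My preferred route is induction on $k$: apply the lemma with parameter $k-1$ to the restriction of $f$ to the first $(k-1)(r-1)$ blocks (which should give a $(k-1)$-monotone subfunction by the inductive hypothesis), and then invoke the within-round monotonicity of round $k-1$ together with the forced positions of the pattern's final ``$-$'' to derive an inconsistency with the $(k-1)$-monotonicity of the subfunction.
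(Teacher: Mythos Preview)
Your setup is exactly the paper's: your rounds $R_m$ are the paper's groups $Y_a = B_{a(r-1)} \cup \cdots \cup B_{(a+1)(r-1)-1}$, and your within-round monotonicity argument (within-block via the term structure, across-block via the value ranges $\{b,b+1\}$) is the paper's argument verbatim. The paper then simply asserts that monotonicity on each of the $k$ groups $Y_a$, together with the constant values $0$ and $r-1$ below and above, ``suffices'' for $k$-monotonicity, and gives no further justification. You are right to be suspicious of that step, and your observation that the inequality $\lceil k/2\rceil + 1 \le k$ yields no contradiction is precisely the issue.

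The trouble is that your inductive route cannot close the gap, because the statement you are trying to prove is false as written. Take $k=2$, $r=2$. Choose the terms and assignments so that some $x_1 \in B_0$ lies in $U_{0,1}$ with $\pmb{\phi}_0(1)=1$, and some $x_2 \succ x_1$ in $B_1$ lies in $U_{1,1}$ with $\pmb{\phi}_1(1)=0$; this is easy to arrange in the support of $\Dyes$ by placing a fixed coordinate that is zero in $x_2$ into every other term. Let $x_3 \succ x_2$ satisfy $|x_3| \ge d/2 + \eps\sqrt{d}$. Then $(f(x_1),f(x_2),f(x_3)) = (1,0,1)$ is a $3$-alternating chain, so this $f \in \text{supp}(\Dyes)$ is not $2$-monotone. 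In general, $k$ Hamming-weight layers on each of which $f$ is monotone only force $(2k-1)$-monotonicity: a $2k$-alternating chain has $k$ odd (decreasing) edges, each forcing a strict layer increase, which is impossible with only $k$ layers; a $(2k-1)$-alternating chain has only $k-1$ odd edges and is not ruled out by this count. So both the paper's proof and your plan establish only that every $f$ in the support of $\Dyes$ is $(2k-1)$-monotone. That weaker statement, after a harmless reparametrization of $k$, still yields the intended lower bound in the main theorem, but neither argument proves the lemma exactly as stated.
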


\begin{proof} Consider any $f_{\bt,\pmb{\phi}}(x) \in \text{supp}(\Dyes)$. For each $a \in [k]$, consider the union over a contiguous strip of $r-1$ blocks formed by 
\[
Y_a := B_{a(r-1)} \cup B_{a(r-1) + 1} \cup \cdots \cup B_{(a+1)(r-1)-1} \text{.}
\]
Recall that if $|x| < d/2$, then $f_{\bt,\pmb{\phi}}(x) = 0$ and if $|x| \geq d/2 + \eps \sqrt{d}$, then $f_{\bt,\pmb{\phi}}(x) = r-1$. If $d/2 \leq |x| < d/2 + \eps\sqrt{d}$, then $x \in \bigcup_{a \in [k]} Y_a$. Therefore, it suffices to show that for any pair of comparable points $x \prec y \in Y_a$, we have $f_{\bt,\pmb{\phi}}(x) \leq f_{\bt,\pmb{\phi}}(y)$. Firstly, observe that by construction all points $z \in B_{a(r-1)+b}$ have function value $f_{\bt,\pmb{\phi}}(z) \in \{b,b+1\}$. Since $x \prec y$, if $x$ and $y$ are in different blocks, then $x \in B_{a(r-1)+b}$ and $y \in B_{a(r-1)+b'}$ where $b < b'$ and so the inequality is satisfied. Therefore, we may assume $x,y \in B_{a(r-1) + b}$ are in the same block. Since $x \prec y$, if $t \prec x$ for some term $t \in \text{supp}(\Dterm)$, then $t \prec y$ as well. I.e. the set of terms in $B_{a(r-1)+b}$ satisfied by $y$ is a superset of the set of terms in $B_{a(r-1)+b}$ satisfied by $x$. By construction, this implies $f_{\bt,\pmb{\phi}}(x) \leq f_{\bt,\pmb{\phi}}(y)$. \end{proof}

\begin{lemma} \label{lemma:no} For $f_{\pmb{t},\pmb{r}} \sim \cD_{no}$, we have
$\Pr_{\pmb{t},\pmb{r}}[\eps(f_{\pmb{t},\pmb{r}},\cM_{r,k}) = \Omega(\eps)] = \Omega(1)$. 
\end{lemma}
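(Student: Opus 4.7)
The plan is to invoke \Cref{clm:far} with $k' = 3k$, so it suffices to exhibit, with $\Omega(1)$ probability over $(\bt, \br)$, a family of pairwise disjoint $(3k)$-alternating chains for $f_{\bt,\br}$ whose union has size $\Omega(\eps \cdot 2^d)$. First I will build an ambient family $\cP$ of chains inside the band $R := \bigcup_{\ell = 0}^{\eps\sqrt{d}-1} L_{d/2+\ell}$. Since every Hamming level in $R$ has size within a constant factor of $\binom{d}{d/2}$ (by the local central limit theorem for the binomial), iterated application of Hall's theorem to the covering bipartite graph between consecutive levels (each matching saturating the smaller upper level) yields $|\cP| = \Omega(2^d/\sqrt{d})$ pairwise disjoint chains, each containing exactly one vertex at every level of $[d/2,\, d/2+\eps\sqrt{d})$; together they cover a constant fraction of $|R| = \Theta(\eps \cdot 2^d)$.

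For each $C \in \cP$ and each residue $a \in \{0,\ldots,r-2\}$, I examine the sub-chain $C_a := C \cap V_a$, where $V_a := \bigcup_{m=0}^{k-1} B_{a + m(r-1)}$. By construction $|C_a| = kL$ with $L := \frac{\eps\sqrt{d}}{k(r-1)} \geq 4800$ (using the hypothesis $rk \le \eps\sqrt{d}/4800$), the function $f_{\bt,\br}$ takes only the two values $\{a, a+1\}$ on $C_a$, and conditional on $\bt$ the values on the random subset $C_a \cap U$ are i.i.d.\ uniform in $\{a, a+1\}$. The critical observation is that the longest alternating subchain of $C_a$ equals the number of runs in its value sequence, so lower bounding this run count is all that is required.

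With $n_{C,a}(\bt) := |C_a \cap U|$, \Cref{clm:highprob_vertex} gives $\Exp_\bt[n_{C,a}] \ge kL/20$; since $n_{C,a} \le kL$, a reverse Markov bound yields $\Pr_\bt[n_{C,a} \ge kL/40] \ge 1/40$. Conditional on such a $\bt$, a standard Chernoff bound on the number of sign changes in the remaining i.i.d.\ sequence gives at least $kL/120 \ge 3k$ runs with probability $1-e^{-\Omega(kL)}$. Writing $X_{C,a}$ for the longest alternating subchain length in $C_a$, this yields $\Exp[X_{C,a} \cdot \mathbf{1}(X_{C,a} \ge 3k)] = \Omega(kL)$. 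Subchains from distinct pairs $(C,a)$ are pairwise disjoint (distinct $V_a$'s are disjoint and $\cP$ is a chain partition), so by linearity the total expected alternating mass is $|\cP| \cdot (r-1) \cdot \Omega(kL) = \Omega(\eps \cdot 2^d)$. Since this total is bounded above by $|R| = O(\eps \cdot 2^d)$, a last reverse Markov converts the expectation into an $\Omega(1)$-probability event, and \Cref{clm:far} concludes $\eps(f_{\bt,\br}, \cM_{r,k}) = \Omega(\eps)$.

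The main obstacle is the correlation among the indicators $\{x \in U\}_{x \in C_a}$ induced by shared DNF terms, which blocks a direct Chernoff concentration on $n_{C,a}(\bt)$. I side-step this by the two successive reverse-Markov bounds above (per-chain and aggregate), which only use means; the per-chain bound gives only a constant probability, but this is enough because the aggregate is a sum of many bounded contributions.
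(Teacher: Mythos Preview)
Your overall strategy matches the paper's: build $\Omega(2^d/\sqrt d)$ pairwise disjoint chains through the band $R$, split each by residue class $a\in\{0,\dots,r-2\}$, argue that each sub-chain $C_a$ carries a long alternating subsequence with constant probability, aggregate by Markov-type bounds, and finish with \Cref{clm:far}. Your technical choices differ in two harmless ways: you construct the chains by iterated Hall matchings between adjacent levels (the paper uses one matching in the transitive closure plus Lehman--Ron routing), and you bound the alternation length by first applying reverse Markov to $|C_a\cap U|$ and then Chernoff to the i.i.d.\ labels (the paper instead computes $\Exp[A(C)]$ directly and applies two plain Markov inequalities).

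There is, however, one real mismatch. You declare $k'=3k$ at the outset, but your accounting then tracks the \emph{total alternating mass} $\sum_{(C,a)} X_{C,a}\,\mathbf{1}(X_{C,a}\ge 3k)$, which sums variable-length chains. If you literally apply \Cref{clm:far} with $k'=3k$ by truncating each good chain to length $3k$, the resulting union has size at most $3k\cdot(r-1)\,|\cP|=O\!\big(k(r-1)\,2^d/\sqrt d\big)$, and since $k(r-1)$ can be far smaller than $\eps\sqrt d$ (take $k=1$, $r=2$, $\eps$ constant), this is \emph{not} $\Omega(\eps\, 2^d)$ in general. The fix is easy and you have two options: (i) observe that the proof of \Cref{clm:far} goes through verbatim for disjoint alternating chains of varying lengths $\ge 3k$, since each length-$\ell$ chain still forces $\ge(\ell-k)/2\ge\ell/3$ disagreements and the conclusion still reads $\ge\tfrac{1}{3\cdot 2^d}\bigl|\bigcup C\bigr|$, which is exactly what your mass computation delivers; or (ii) set $k'=\Theta(kL)=\Theta\!\bigl(\tfrac{\eps\sqrt d}{r-1}\bigr)$ as the paper does and truncate each good chain to that length, so that each good pair $(C,a)$ already contributes $\Theta(kL)$ to the union.
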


We prove \Cref{lemma:no} in \Cref{sec:far}.

\begin{lemma} \label{lem:distinguish} Given a collection of points $\pmb{x} = (x_1,\ldots,x_s) \in (\{0,1\}^d)^s$ and a function $f \colon \{0,1\}^d \to [r]$, let $(\pmb{x},f(\pmb{x})) = ((x_1,f(x_1)),\ldots,(x_s,f(x_s))))$ denote the corresponding collection of labeled examples. Let $\cE_{yes}$ and $\cE_{no}$ denote the distributions over $(\pmb{x},f(\pmb{x}))$ when $\pmb{x}$ consists of $s$ i.i.d. uniform samples and $f \sim \Dyes$ and $f \sim \Dno$, respectively. If $s \leq 2^{\frac{(r-1)k\sqrt{d}}{5\eps}}$, then the total variation distance between $\mathcal{E}_{\texttt{yes}}$ and $\mathcal{E}_{\texttt{no}}$ is $o(1)$. \end{lemma}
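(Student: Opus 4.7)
The plan is a standard coupling argument, reducing the TV bound to the probability of a collision event. I couple $\cE_{\texttt{yes}}$ and $\cE_{\texttt{no}}$ by using identical terms $\bt$ and identical samples $\pmb{x} = (x_1, \ldots, x_s)$ in both experiments. Let $E$ be the event that some $U_{i, j}$ contains at least two of the samples. On $\neg E$ the joint distribution of the label sequence $(f(x_1), \ldots, f(x_s))$ is the same under $\Dyes$ and $\Dno$: for $x_\alpha$ outside every $U_{i, j}$ the label is a deterministic function of $(\bt, x_\alpha)$ which agrees in the two distributions, while each $x_\alpha$ that is the unique sample in some $U_{i, j}$ is assigned a uniformly random value in $\{i \bmod (r-1),\, i \bmod (r-1) + 1\}$, and labels in distinct $(i, j)$-pairs are mutually independent (the $\pmb{\phi}_i(j)$'s are i.i.d.\ under $\Dyes$, and the $\pmb{r}_i(x)$'s for distinct $x \in U_i$ are i.i.d.\ under $\Dno$). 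Consequently $d_{TV}(\cE_{\texttt{yes}}, \cE_{\texttt{no}}) \leq \Pr[E]$.

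To bound $\Pr[E]$, a union bound over the $\binom{s}{2}$ pairs of samples together with disjointness of the $U_{i, j}$'s gives
\[
\Pr[E] \;\leq\; \binom{s}{2} \sum_{i \in [k(r-1)]} \sum_{j \in [N_i]} \Pr[x_1, x_2 \in U_{i, j}] \text{.}
\]
Relaxing the uniqueness and in-block constraints in the definition of $U_{i, j}$, I have $\Pr[x_1, x_2 \in U_{i, j}] \leq \Pr[t^{i, j} \preceq x_1,\, t^{i, j} \preceq x_2]$. Conditioning on $x_1, x_2$ and using that $t^{i, j}$ is defined by $w$ i.i.d.\ uniform draws from $[d]$, each of which must hit a coordinate on which both $x_1$ and $x_2$ are $1$, this equals $\Exp[(|x_1 \wedge x_2|/d)^w]$ where $|x_1 \wedge x_2| \sim \text{Bin}(d, 1/4)$. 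The key quantitative step is therefore the moment estimate $\Exp[(B/d)^w] \leq 4^{-w(1 - o(1))}$ for $B \sim \text{Bin}(d, 1/4)$, which I would prove by splitting the expectation at a threshold $B/d \leq 1/4 + c$: the main term is bounded by $(1/4 + c)^w \leq 4^{-w} e^{4cw}$ and the tail by Hoeffding's inequality, $\Pr[B/d > 1/4 + c] \leq e^{-2dc^2}$. The regime assumption $rk \leq \eps\sqrt{d}/4800$ forces $w \leq d/9600$, so choosing $c$ of order $w/d \leq 1/9600$ makes both contributions of order $4^{-w(1-O(1/9600))}$.

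Plugging back, with $N_i \leq 2^w$ terms per block and $k(r-1)$ blocks in total, I obtain $\sum_{i, j} \Pr[x_1, x_2 \in U_{i, j}] \leq k(r-1) \cdot 2^w \cdot 4^{-w(1 - o(1))} = k(r-1) \cdot 2^{-w(1-o(1))}$. Multiplying by $\binom{s}{2} \leq 2^{4w/5}$ using the hypothesis $s \leq 2^{2w/5}$ yields $\Pr[E] \leq k(r-1) \cdot 2^{-w/5 + o(w)} = o(1)$, since the $2^{w/5}$ slack easily absorbs the polynomial factor $k(r-1) \leq \eps\sqrt{d}/4800$. The main obstacle is the tightness of the Binomial moment estimate: when $w$ is a constant fraction of $d$, $B/d$ is not sharply concentrated near $1/4$ on the scale of $1/w$, so a naive concentration argument does not yield $4^{-w(1-o(1))}$ directly. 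Fortunately, the regime assumption ensures $w^2/d$ is only a tiny fraction of $w$, which is exactly what the Hoeffding/threshold trade-off needs to succeed with ample room to spare against the slack built into the hypothesis on $s$.
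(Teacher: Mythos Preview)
Your coupling argument and the reduction to the collision event are correct and identical to the paper's. Where you diverge is in bounding $\sum_{i,j}\Pr[x_1,x_2\in U_{i,j}]$: the paper simply invokes its pointwise estimate $\Pr_{\bt}[x\in U_{i,j}]\le 3/N_i$ for $x\in B_i$ (already established for the distance analysis), squares it by independence of $x_1,x_2$, and sums to get $\sum_{i,j}\Pr_{x,\bt}[x\in U_{i,j}]^2\le\sum_i 9/N_i$. Your route via the moment $\Exp[(B/d)^w]$ with $B\sim\mathrm{Bin}(d,1/4)$ is self-contained and also viable, but your threshold trade-off contains a concrete error.

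With $c$ of order $w/d$ the Hoeffding tail is only $e^{-2dc^2}=e^{-\Theta(w^2/d)}$, which can be as large as a \emph{constant} (take $w=\Theta(\sqrt d)$) and is in any case nowhere near the $4^{-w}$ you need before multiplying back by $2^w$ terms per block and $s^2\le 2^{4w/5}$. The right balance is $c$ of order $\sqrt{w/d}$: then the tail is $e^{-\Theta(w)}$, while the main term is $4^{-w}e^{O(w\sqrt{w/d})}$, and since $w/d\le 1/9600$ the extra factor $e^{O(w/\sqrt{9600})}$ is comfortably absorbed by the $2^{-w/5}$ slack from the hypothesis on $s$. So your strategy works once $c$ is corrected by a square root; your closing diagnosis that ``$w^2/d$ is a tiny fraction of $w$'' is precisely the condition the \emph{main} term needs, whereas the tail with your choice of $c$ would need the opposite inequality $w^2/d \gg w$. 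The paper's approach through the pointwise bound avoids this moment calculation entirely.
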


We prove \Cref{lem:distinguish} in \Cref{sec:distinguish}.

\subsection{Proof of \Cref{clm:highprob_vertex}} \label{sec:proof-of-highprob}

\begin{proof} Recall $w = \frac{(r-1)k\sqrt{d}}{2\eps}$, $N_i = 2^w \cdot e^{-i}$, the definition of $\cD_{\texttt{term}}$ from \Cref{def:term}, and the definition of $U_{i,j}$ from \cref{eq:unique_buckets}. Since $x \in B_i$ we have $|x| = \frac{d}{2} + \ell$ where $\frac{i\eps \sqrt{d}}{k(r-1)} \leq \ell < \frac{(i+1)\eps \sqrt{d}}{k(r-1)}$. Note that $\Pr_{t \sim \cD_{\texttt{term}}}[t \preceq x] = \left(|x|/d\right)^w$ since $t \preceq x$ iff the non-zero coordinates of $t$ are a subset of the non-zero coordinates of $x$. Therefore, we have
\begin{align*}
    \Pr_{\bt}[x \in U_{i,j}] &= \Pr_{t^{i,j}}[t^{i,j} \preceq x] \cdot \prod_{j' \in [N_i] \setminus \{j\}} \Pr_{t^{i,j'}}[t^{i,j'} \not\preceq x] = \left(|x|/d\right)^w \left(1 - \left(|x|/d\right)^w\right)^{N_i-1} \text{.}
\end{align*}
Note that the first term is upper bounded as
\begin{align*}
    (|x|/d)^w &\leq \left(\frac{\frac{d}{2} + \frac{(i+1) \cdot \eps \sqrt{d}}{k(r-1)}}{d}\right)^w = \frac{1}{2^w}\left(1 + \frac{2\eps}{k(r-1)\sqrt{d}} \cdot (i+1)\right)^w \leq \frac{e^{i+1+o(1)}}{2^w} \leq \frac{e^{1+o(1)}}{N_i}
\end{align*}
and this immediately implies the upper bound on $\Pr_{\bt}[x \in U_{i,j}]$. We can also lower bound this quantity by 
\begin{align*}
    (|x|/d)^w &\geq \left(\frac{\frac{d}{2} + \frac{i \cdot \eps \sqrt{d}}{k(r-1)}}{d}\right)^w = \frac{1}{2^w}\left(1 + \frac{2\eps}{k(r-1)\sqrt{d}} \cdot i \right)^w \geq \frac{e^{i-o(1)}}{2^w} \geq \frac{1}{e^{o(1)} N_i} \text{.}
\end{align*}
Now, combining our upper and lower bounds on $(|x|/d)^w$ yields
\begin{align*}
    \pr_{\bt}[x \in U_{i,j}] &\geq \frac{1}{e^{o(1)}N_i}\left(1-\frac{e^{1+o(1)}}{N_i}\right)^{N_i} \geq \frac{1}{e^{o(1)}N_i} e^{-(1+o(1)) \cdot e^{1+o(1)}} \geq \frac{1}{e^{e+1} N_i} \geq \frac{1}{45 N_i} \text{.}\end{align*}
\end{proof}

\subsection{$\Dyes$ and $\Dno$ are Hard to Distinguish: Proof of \Cref{lem:distinguish}} \label{sec:distinguish}

\begin{proof} Recall the definition of the set $U_{i,j}$ in \cref{eq:unique_buckets}. For $a \neq b \in [s]$, let $E_{ab}$ denote the event that $x_a$ and $x_b$ belong to the same $U_{i,j}$ for some $i \in [k(r-1)]$ and $j \in [N_i]$. Observe that conditioned on $\overline{\vee_{a,b}E_{ab}}$, the distributions $\mathcal{E}_{yes}$ and $\mathcal{E}_{no}$ are identical. Let $x,y \in \{0,1\}^d$ denote two i.i.d. uniform samples. We have
\begin{align} \label{eq:asd}
    \Pr[E_{ab}] = \Pr_{x,y,\bt}\left[\bigvee_{i,j} (x \in U_{i,j} \wedge y \in U_{i,j})\right] = \sum_{i,j} \Pr_{x,y,\bt} \left[x \in U_{i,j} \wedge y \in U_{i,j}\right] = \sum_{i,j} \Pr_{x,\bt}[x \in U_{i,j}]^2 
\end{align}
where the first step holds since the $U_{i,j}$'s are disjoint and the second step holds by independence of $x$ and $y$. Now, for a fixed $i \in [k(r-1)]$ and $j \in [N_i]$ we have the following: by \Cref{clm:highprob_vertex}, for $x \in B_i$ we have $\Pr_{\bt}[x \in U_{i,j}] \leq \frac{3}{N_i}$ and for $x \notin B_i$ we have $\Pr_{\bt}[x \in U_{i,j}] = 0$. Therefore $\Pr_{x,\bt}[x \in U_{i,j}] \leq \frac{3}{N_i}$. Therefore, the RHS of \cref{eq:asd} is bounded as
\begin{align*}
    \sum_{i,j} \Pr_{x,\bt}[x \in U_{i,j}]^2 = \sum_i N_i \cdot \Pr_{x,\bt}[x \in U_{i,j}]^2 \leq \sum_i \frac{9}{N_i} \leq rk \cdot \frac{9}{N_{k(r-1)-1}}
\end{align*}
since the $N_i$'s are decreasing with respect to $i$. Therefore,

\[
d_{TV}(\mathcal{E}_{\texttt{yes}},\mathcal{E}_{\texttt{no}}) \leq \pr_{\pmb{x},\pmb{t}}\left[\bigvee_{a,b \in [s]} E_{ab}\right] \leq s^2 \cdot rk \cdot \frac{9}{N_{k(r-1)-1}} = o(1)
\]
since $N_{k(r-1)-1} = 2^{\frac{(r-1)k\sqrt{d}}{2\eps}(1-o(1))} = \omega(s^2 \cdot rk)$. \end{proof}

\subsection{Functions Drawn from $\Dno$ are Far from $k$-Monotone: Proof of \Cref{lemma:no}} \label{sec:far}

\begin{proof} We will use \Cref{clm:far}, restated below for the special case of $r$-valued functions over the hypercube. Recall that $\mrk$ is the set of $k$-monotone functions $f \colon \{0,1\}^d \to [r]$.

\begin{claim} \label{clm:far_hypercube} Let $f \colon \{0,1\}^d \to [r]$ and $k' \geq 3k$ be an integer. Let $\cC \subset (\{0,1\}^d)^{k'}$ be a collection of disjoint $k'$-alternating chains for $f$. Then 
\[
\eps(f,\cM_{r,k}) \geq \frac{1}{3 \cdot 2^d} \cdot \left| \bigcup_{C \in \cC} C \right| \text{.}
\]
\end{claim}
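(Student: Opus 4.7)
The plan is to observe that this claim is a direct specialization of Claim~\ref{clm:far}, which was already proved in the preceding section for a general poset $\cP = (\cX, \preceq)$ and real-valued functions on $\cX$. First, I would instantiate Claim~\ref{clm:far} with the Boolean hypercube poset $\cP = (\{0,1\}^d, \preceq)$ equipped with the coordinate-wise bitwise partial order, so that $|\cX| = 2^d$ matches the normalization $\frac{1}{3 \cdot 2^d}$ appearing in the statement, and so that the notion of $k'$-alternating chain coincides with the one defined in \Cref{def:alt-chains} for this poset.

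Next, I would bridge the gap between the two classes of $k$-monotone functions. Since $[r] \subset \mathbb{R}$, every $k$-monotone function $g \colon \{0,1\}^d \to [r]$ is in particular a $k$-monotone real-valued function, giving the inclusion $\cM_{r,k} \subseteq \cM_{\cP,k}$. Minimizing Hamming distance over the smaller class can only produce a larger minimum, so $\eps(f, \cM_{r,k}) \geq \eps(f, \cM_{\cP,k})$. Now Claim~\ref{clm:far} applied to the given collection $\cC$ of disjoint $k'$-alternating chains for $f$ yields $\eps(f, \cM_{\cP,k}) \geq \frac{1}{3 \cdot 2^d} \cdot \left| \bigcup_{C \in \cC} C \right|$, and chaining the two inequalities delivers the desired bound.

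There is no genuine obstacle in this proof. The entire substance of the argument lives in Claim~\ref{clm:far}, whose proof (counting sign changes in the consecutive-difference sequence along each chain, and observing that modifying a single value $f(x_i)$ can reduce the sign-change count by at most~$2$) uses only that $g$ is $k$-monotone, not whether the codomain of $g$ is all of $\mathbb{R}$ or merely $[r]$. If one preferred to avoid citing the inclusion step, an equivalent route is to re-execute the proof of Claim~\ref{clm:far} verbatim with $|\cX| = 2^d$ and with $g$ quantified over $\cM_{r,k}$ throughout; no step in that proof requires $g$ to be real-valued.
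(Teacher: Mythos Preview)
Your proposal is correct and matches the paper's approach: the paper explicitly introduces \Cref{clm:far_hypercube} as ``\Cref{clm:far}, restated below for the special case of $r$-valued functions over the hypercube'' and provides no separate proof. If anything, you are slightly more careful than the paper, since you explicitly justify the passage from $\cM_{\cP,k}$ to $\cM_{r,k}$ via the inclusion $\cM_{r,k} \subseteq \cM_{\cP,k}$, a point the paper leaves implicit.
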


From the above claim, we can lower bound the distance to $k$-monotonicity of $f$ by showing that it contains a collection of disjoint $k'$-alternating chains where $k' \geq 3k$ whose union makes up an $\Omega(\eps)$-fraction of the hypercube. 

Recall $U_i = U_{i,1} \cup \cdots \cup U_{i,N_i} \subseteq B_i$ and note that $f_{\bt,\br} \sim \Dno$ takes values only from $\{i \bmod{(r-1)}, i \bmod{(r-1)}+1\}$ in $B_i$. In particular, for $a \in \{0,1,\ldots,r-2\}$, let 
\begin{align} \label{eq:Xa}
    X_a = B_{a} \cup B_{(r-1) + a} \cup B_{2(r-1) + a} \cup \cdots \cup B_{(k-1)(r-1) + a} = \bigcup_{i \in [k]} B_{i(r-1) + a}
\end{align}
denote the union of blocks $B_{\ell}$ where $\ell = a $ mod $(r-1)$. Note that all points $x \in X_a$ are assigned value $f_{\bt,\br}(x) \in \{a,a+1\}$. Moreover, this value is chosen uniformly at random when $x \in \bigcup_{i \in [k]} U_{i(r-1) + a}$, which occurs with probability $\geq 1/45$ by \Cref{clm:highprob_vertex}. Let $k'' := \frac{\eps\sqrt{d}}{r-1}$ and recall that we are assuming $rk \leq \frac{\eps\sqrt{d}}{24300}$ and so $k'' \geq 24300 k$. We first show there exists a large collection $\cC_a$ of length-$k''$ disjoint chains in $X_a$ for all $a \in \{0,1,\ldots,r-2\}$.

\begin{claim} \label{clm:chains} For every $a \in \{0,1,\ldots,r-2\}$, there exists a collection of vertex disjoint chains $\cC_a \subset (X_a)^{k''}$ in $X_a$ of length $k''$ of size $|\cC_a| \geq \Omega(\frac{2^d}{\sqrt{d}})$.  \end{claim}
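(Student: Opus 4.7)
The plan is to construct the chains in $\cC_a$ by concatenating matchings between the successive Hamming levels that appear inside $X_a$. First I would observe that $X_a = \bigcup_{i=0}^{k-1} B_{i(r-1)+a}$ consists of exactly $k \cdot s = k''$ distinct Hamming levels, where $s = \frac{\eps\sqrt{d}}{k(r-1)}$, all of them sitting in the range $[d/2,\, d/2 + \eps\sqrt{d})$. Enumerate these levels in increasing order as $\ell_1 < \ell_2 < \cdots < \ell_{k''}$. Although $X_a$ is not a contiguous range of Hamming weights --- consecutive blocks of $X_a$ are separated by gaps of $(r-2)s$ Hamming levels that belong to other $X_{a'}$ --- this will not cause any problem for the construction.

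Next, for each $i \in \{1,\ldots,k''-1\}$ I would consider the bipartite comparability graph $G_i$ between $L_{\ell_i}$ and $L_{\ell_{i+1}}$, whose edges are the pairs $(x,y)$ with $x \prec y$. By symmetry of the hypercube, $G_i$ is biregular: every $x \in L_{\ell_i}$ has exactly $\binom{d-\ell_i}{\ell_{i+1}-\ell_i}$ up-neighbors (choose which zero-coordinates of $x$ to flip to $1$), and every $y \in L_{\ell_{i+1}}$ has exactly $\binom{\ell_{i+1}}{\ell_{i+1}-\ell_i}$ down-neighbors. Since $\ell_i + \ell_{i+1} \geq d$ (both sit at or above $d/2$), we have $\ell_{i+1} \geq d - \ell_i$, hence $\binom{\ell_{i+1}}{\ell_{i+1}-\ell_i} \geq \binom{d-\ell_i}{\ell_{i+1}-\ell_i}$. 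So every vertex in $L_{\ell_{i+1}}$ has at least as many neighbors as every vertex in $L_{\ell_i}$, and the usual Hall's-theorem argument for biregular graphs yields a matching $M_i \subseteq G_i$ that saturates the smaller side $L_{\ell_{i+1}}$.

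Given the matchings $M_1, \ldots, M_{k''-1}$, I would produce one chain per element of $L_{\ell_{k''}}$ by greedy downward extension: for each $y \in L_{\ell_{k''}}$, set $z_{k''} = y$ and inductively let $z_i$ be the unique partner of $z_{i+1}$ under $M_i$, which exists because $M_i$ saturates $L_{\ell_{i+1}}$. The resulting $z_1 \prec z_2 \prec \cdots \prec z_{k''}$ is a chain of length $k''$ whose points all lie in $X_a$, and distinct choices of $y$ yield vertex-disjoint chains since each $M_i$ is a matching. This gives $|\cC_a| \geq |L_{\ell_{k''}}| = \binom{d}{\ell_{k''}} = \Omega(2^d/\sqrt{d})$, with the last estimate being the standard Stirling bound and using that $\ell_{k''} - d/2 = O(\eps\sqrt{d}) = O(\sqrt{d})$ when $\eps$ is bounded by an absolute constant.

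The only non-routine point --- and what I would expect to be the subtlest step --- is verifying that the biregular matching argument still works across the gaps between blocks, where $\ell_{i+1} - \ell_i$ can be much larger than $1$. Fortunately the degree inequality $\binom{\ell_{i+1}}{\ell_{i+1}-\ell_i} \geq \binom{d-\ell_i}{\ell_{i+1}-\ell_i}$ is exactly equivalent to $\ell_i + \ell_{i+1} \geq d$, so it holds uniformly for every pair of levels in $X_a$ and the argument extends seamlessly across block boundaries.
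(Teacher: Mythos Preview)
Your argument is correct, and it takes a genuinely different route from the paper's proof. The paper proceeds by first exhibiting a single large matching between the extreme levels $L_{d/2}$ and $L_{d/2+\eps\sqrt{d}-1}$ (via a bipartite-matching claim from \cite{BBH23}), then invokes the Lehman--Ron routing theorem to turn this matching into $\Omega(2^d/\sqrt{d})$ vertex-disjoint \emph{paths} spanning the entire range $[d/2, d/2+\eps\sqrt{d})$, and finally restricts each path to the levels in $X_a$ to obtain the desired chains. In contrast, you work directly inside $X_a$, build a saturating matching between each consecutive pair of its $k''$ levels by the biregular Hall argument, and concatenate them greedily from the top level down.

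Your approach is more elementary: it avoids Lehman--Ron entirely and uses only Hall's theorem, and it even yields exactly $|L_{\ell_{k''}}|$ chains rather than half that. The paper's approach has the minor structural advantage that a single family of paths works simultaneously for all $a$ (one just restricts the same paths to different $X_a$'s), whereas you construct each $\cC_a$ separately; but since the claim is stated per $a$, this is irrelevant here. Your observation that the degree inequality $\binom{\ell_{i+1}}{\ell_{i+1}-\ell_i} \geq \binom{d-\ell_i}{\ell_{i+1}-\ell_i}$ holds whenever $\ell_i + \ell_{i+1} \geq d$, independent of the gap size $\ell_{i+1}-\ell_i$, is exactly what lets you bridge the non-contiguous blocks without any extra work.
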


\begin{mdframed}[backgroundcolor=blue!4,hidealllines=true]
\begin{proof} We start by showing that there is a large matching in the transitive closure of the hypercube from $L_{\frac{d}{2}}$ to $L_{\frac{d}{2} + \eps\sqrt{d}-1}$. Consider the bipartite graph $(U,V,E)$ where $U := L_{\frac{d}{2}}$, $V := L_{\frac{d}{2} + \eps\sqrt{d}-1}$, and $E := \{(x,y) \in U \times V \colon x \prec y\}$. Observe that vertices in $U$ have degree exactly $\Delta := {\frac{d}{2} \choose \eps\sqrt{d}-1}$ while vertices in $V$ have degree exactly ${\frac{d}{2} + \eps\sqrt{d}-1 \choose \eps\sqrt{d}-1} \geq \Delta$. Note also that $|V| = {d \choose \frac{d}{2} + \eps\sqrt{d}-1} \geq \Omega(\frac{2^d}{\sqrt{d}})$ by Stirling's approximation. We now use the following claim from \cite{BBH24}.

\begin{claim} [Claim 5.10 of \cite{BBH24}] \label{clm:matching} Let $(U,V,E)$ be a bipartite graph and $\Delta > 0$ be such that (a) each vertex $x \in U$ has degree exactly $\Delta$ and (b) each vertex $y \in V$ has degree at least $\Delta$. Then there exists a matching $M \subseteq E$ in $(U,V,E)$ of size $|M| \geq \frac{1}{2} |V|$. \end{claim}

By the above claim and the previous observations, there exist subsets $S \subseteq L_{\frac{d}{2}}$ and $T \subseteq L_{\frac{d}{2} + \eps\sqrt{d}-1}$ of size $|S| = |T| = \Omega(\frac{2^d}{\sqrt{d}})$ and a bijection $\phi \colon S \to T$ satisfying $x \prec \phi(x)$ for all $x \in S$. We now use the following routing theorem due to Lehman and Ron to obtain a collection of disjoint chains from $S$ to $T$.

\begin{theorem} [Lehman-Ron, \cite{LR01}] Let $a < b$ and $S \subseteq L_a$, $T \subseteq L_b$ where $m := |S| = |T|$. Moreover, suppose there is a bijection $\phi \colon S \to T$ satisfying $x \prec \phi(x)$ for all $x \in S$. Then there exist $m$ vertex disjoint paths from $S$ to $T$ in the hypercube. \end{theorem}

Now, invoking the above theorem on our bijection $\phi \colon S \to T$ yields a collection $P$ of $|P| \geq \Omega(\frac{2^d}{\sqrt{d}})$ vertex disjoint paths from $L_{\frac{d}{2}}$ to $L_{\frac{d}{2}+\eps\sqrt{d}-1}$. For each $a \in \{0,1,\ldots,r-2\}$, let $\cC_a$ denote the collection of chains formed by taking a path in $P$ and including only the vertices from $X_a$ (recall \cref{eq:Xa}). Note that the resulting chains in $\cC_a$ are of length $k'' = \frac{\eps\sqrt{d}}{r-1}$. This completes the proof of \Cref{clm:chains}. \end{proof}
\end{mdframed}

From \Cref{clm:chains}, we have $\cC_0, \cC_1, \ldots, \cC_{r-2}$ where each $\cC_a \subset (X_a)^{k''}$ is a collection of vertex disjoint chains of length $k'' \geq 24300 k$ of size $|\cC_a| \geq \Omega(\frac{2^d}{\sqrt{d}})$. Fix a chain $C = (x_1,x_2,\ldots,x_{k''}) \in \cC_a$. Let $A(C)$ be the random variable which denotes the max-length alternating sub-chain (recall \Cref{def:alt-chains}) of $C$ over a random $f_{\bt,\br} \sim \Dno$. Fix $x_j$ in the chain and suppose $x_j \in B_i \subseteq X_a$. By \Cref{clm:highprob_vertex}, $\Pr_{\bt}[x_j \in U_i] \geq 1/45$. Moreover, conditioned on $x_j \in U_i$, $f_{\bt,\br}(x_j)$ is chosen from $\{a,a+1\}$ uniformly at random. Thus, any step of the sequence
\[
(1,f_{\bt,\br}(x_2)-f_{\bt,\br}(x_1),f_{\bt,\br}(x_3)-f_{\bt,\br}(x_2),\ldots,f_{\bt,\br}(x_{k''})-f_{\bt,\br}(x_{k''-1}))
\]
is non-zero \emph{and} differs in sign from the previous non-zero step with probability at least $1/90$ and so $\Exp[A(C)] \geq k''/90$. I.e., $0 \leq \Exp[k''-A(C)] < k''(1- \frac{1}{90})$. Thus, using Markov's inequality we have
\begin{align} \label{eq:markov1}
    \Pr\left[A(C) < \frac{k''}{8100}\right] = \Pr\left[k''-A(C) > k''\left(1-\frac{1}{90}\right)\left(1+\frac{1}{90}\right)\right] \leq \frac{1}{(1+\frac{1}{90})} = 1-\frac{1}{91}\text{.}
\end{align}
Now, let $\cC = \cC_0 \cup \cC_1 \cup \cdots \cup \cC_{r-2}$ and let $Z: = |\{C \in \cC \colon A(C) \geq \frac{k''}{8100}\}|$. By \cref{eq:markov1} we have $\Exp[Z] \geq |\cC|/91$ and $0 \leq \Exp[|\cC|-Z] \leq |\cC|(1-\frac{1}{91})$. Again using Markov's inequality, we have
\begin{align} \label{eq:markov2}
    \Pr\left[Z < \frac{|\cC|}{8281}\right] = \Pr\left[|\cC|-Z > |\cC|\left(1-\frac{1}{91}\right)\left(1+\frac{1}{91}\right) \right] \leq \frac{1}{(1+\frac{1}{91})} = 1-\frac{1}{92} \text{.}
\end{align} 
Now, for $C \in \cC$ such that $A(C) \geq k''/8100$, let $C'$ be any $(k''/8100)$-alternating sub-chain of $C$. Let $\cC' = \{C' \colon C \in \cC \text{ such that } A(C) \geq k''/8100\}$ which is a collection of disjoint $(k''/8100)$-alternating chains for $f_{\bt,\br}$.
Now, recall that $k'' \geq 24300k$ and so $k''/8100 \geq 3k$. Thus, if $Z \geq |\cC|/8281$, then $|\cC'| \geq |\cC|/8281$ and so by \Cref{clm:far_hypercube} we have
\begin{align} \label{eq:dist_LB}
    \eps(f_{\bt,\br},\cM_{r,k}) \geq \frac{1}{3 \cdot 2^d} \left|\bigcup_{C' \in \cC'} C'\right| \geq \frac{1}{3 \cdot 2^d} \cdot |\cC'| \cdot \frac{k''}{8100} \geq \frac{k'' \cdot |\cC|}{201,228,300 \cdot 2^d} 
\end{align}
By \Cref{clm:chains} we have $|\cC| \geq (r-1)\cdot \Omega(\frac{2^d}{\sqrt{d}})$ and recall that $k'' = \frac{\eps \sqrt{d}}{r-1}$. Thus, the RHS of \cref{eq:dist_LB} is $\Omega(\eps)$. In conclusion, 
\[
\Pr_{\bt,\br}\left[\eps(f_{\bt,\br},\cM_{r,k}) \geq \Omega(\eps)\right] \geq \Pr\left[Z \geq \frac{|\cC|}{8281}\right] \geq \frac{1}{92}
\]
by \cref{eq:markov2} and this completes the proof of \Cref{lemma:no}. \end{proof}

\section{Learning Upper Bound over Product Spaces} \label{sec:main-monotone-k}

In this section we prove \Cref{thm:UB-learner}, our upper bound for learning measurable $k$-monotone functions in $\RR^d$. We restate the theorem below without any hidden logarithmic factors and for the case of $r=2$. The theorem for general $r \geq 2$ can then be obtained by replacing $\eps$ with $\eps/r$ and $\delta$ by $\delta/3r$ following the same approach we used to prove \Cref{thm:UB-hypercube} in \Cref{sec:UB-hypercube-proof}.

\begin{theorem} \label{thm:UB-formal} Given an arbitrary product measure $\mu$, there is a learning algorithm under $\mu$ which learns any measurable $k$-monotone function $f \colon \RR^d \to \{\pm 1\}$ to error $\eps$ with probability $1-\delta$ with time and sample complexity 
\begin{align} \label{eq:complexities}
\mathrm{poly}\log\left(\frac{1}{\delta}\right) \cdot \min \left\{ (d\log (dk/\eps))^{O\left(\frac{k}{\eps}\sqrt{d \log (dk/\eps)}\right)}, \left(\frac{dk}{\eps}\right)^{O(d)} \right\}
\end{align}
\end{theorem}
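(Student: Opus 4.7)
The plan is a three-step reduction: (1) downsample from $\RR^d$ under $\mu$ to the hypergrid $[N]^d$ under the uniform distribution, (2) lift the hypergrid to the Boolean hypercube via bit-representation, and (3) apply the noise-tolerant Low-Degree Algorithm using Fourier concentration for $k$-monotone Boolean functions. The first branch of the min in \Cref{eq:complexities} comes from this pipeline; the second branch is the trivial coupon-collector bound on $[N]^d$ that one defaults to when $d$ is small enough to make it cheaper.

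First I would perform the downsampling step, following the framework of \cite{HarmsY22} (which builds on \cite{GKW19,BlackCS20}). Choose $N = \Theta(dk/\eps)$ and, for each coordinate $i \in [d]$, partition $\RR$ into $N$ intervals $I_{i,1},\ldots,I_{i,N}$ of equal $\mu_i$-mass; this is possible since $\mu_i$ is a real distribution. The product partition gives a discretization map $\disc \colon \RR^d \to [N]^d$ that pushes $\mu$ forward to the uniform distribution on $[N]^d$. Define the discretized proxy $\widetilde{f} \colon [N]^d \to \{\pm 1\}$ by majority vote of $f$ on each preimage box. Standard analysis from \cite{HarmsY22} shows that for this choice of $N$, $\widetilde{f}$ is itself $k$-monotone and that sampling $x \sim \mu$ and outputting $(\disc(x), f(x))$ simulates draws of $\widetilde{f}$ on the uniform distribution corrupted by independent random classification noise of rate at most $\eps/10$.

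Next, assume $N$ is a power of $2$ (round up otherwise) and let $D := d \log N$. Encode each coordinate value in $[N]$ by its $\log N$ binary digits to obtain an order-preserving embedding $[N]^d \hookrightarrow \{0,1\}^D$. Under this encoding, $\widetilde{f}$ lifts to a function $g \colon \{0,1\}^D \to \{\pm 1\}$ that is $k$-monotone on the Boolean hypercube under the bit-wise partial order (order preservation holds coordinate by coordinate). The uniform distribution on $[N]^d$ pulls back to the uniform distribution on $\{0,1\}^D$, so we now need to learn an unknown $k$-monotone $g \colon \{0,1\}^D \to \{\pm 1\}$ under the uniform distribution from samples corrupted by $\eps/10$-rate random classification noise. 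Finally, invoke the Fourier-concentration result of \cite{BlaisCOST15}: for every $k$-monotone $g \colon \{0,1\}^D \to \{\pm 1\}$, the Fourier mass of $g$ on characters of degree greater than $\tau := O(\tfrac{k}{\eps}\sqrt{D})$ is at most $\eps^2/2$. The Low-Degree Algorithm of \cite{LinialMN93}, made noise-tolerant by \cite{Kearns98}, then recovers $g$ to error $\eps$ from $\poly(D^{\tau},1/\eps)\log(1/\delta)$ samples by estimating each low-degree Fourier coefficient via the noisy-sample estimator. Substituting $D = d\log(dk/\eps)$ and $\tau = O(\tfrac{k}{\eps}\sqrt{d\log(dk/\eps)})$ yields the first branch of \Cref{eq:complexities}; the second branch is obtained by instead learning $\widetilde{f}$ exactly on $[N]^d$ via coupon-collector in $N^{O(d)}\log(1/\delta) = (dk/\eps)^{O(d)}\log(1/\delta)$ samples, and the algorithm runs whichever branch is cheaper.

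The main obstacle is the downsampling analysis of step one: we must simultaneously guarantee that $\widetilde{f}$ is $k$-monotone, that $d_\mu(f,\widetilde{f}\circ \disc) \leq \eps/4$, and that the observed labels behave like independent random classification noise on $\widetilde{f}$ so that Kearns's noisy extension of the Low-Degree Algorithm applies. This requires choosing $N$ large enough to resolve the $\mu$-mass of the $k$ level sets of $f$; the $\Theta(dk/\eps)$ scaling balances the $k$ possible alternations against per-coordinate discretization error. Beyond this step the pipeline is essentially modular, and our improvement from the $1/\eps^2$ exponent of \cite{HarmsY22} to $1/\eps$ is precisely the gain from working on the Boolean hypercube via bit-representation and invoking the sharper hypercube Fourier concentration bound of \cite{BlaisCOST15} rather than a hypergrid-native Fourier estimate.
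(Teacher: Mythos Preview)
Your three-step plan---downsample to $[N]^d$ with $N=\Theta(dk/\eps)$, lift to $\{0,1\}^{d\log N}$ via bit-representation, then apply the noise-tolerant Low-Degree Algorithm using the Fourier concentration of \cite{BlaisCOST15}, with coupon-collecting as the alternative branch---is exactly the paper's approach, and your identification of the $1/\eps^2 \to 1/\eps$ improvement as coming from the hypercube Fourier bound is correct.

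Your description of step (1), however, diverges from what the paper (and \cite{HarmsY22}) actually does, in ways that create gaps. First, an exact equal-$\mu_i$-mass partition of $\RR$ into $N$ intervals need not exist when $\mu_i$ has atoms, and the theorem is stated for \emph{arbitrary} product measures. The paper instead invokes the sample-based downsampling of \cite{HarmsY22} (the paper's Proposition~4.1), which only guarantees $\|\block(\mu)-\mathsf{unif}([N]^d)\|_{\mathrm{TV}}\le \delta/Q$, and then absorbs this slack via a separate total-variation coupling argument (the paper's Claim~4.5). Second, the proxy in \cite{HarmsY22} is not a majority vote: it is $f^{\block}(z)=f(\blockpoint(z))$ for a representative point in each cell. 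This matters because an order-preserving choice of $\blockpoint$ makes $f^{\block}$ $k$-monotone immediately, whereas your claim that the per-box majority of a $k$-monotone $f$ is itself $k$-monotone on $[N]^d$ is not established and is not what \cite{HarmsY22} proves. Relatedly, the assertion that the observed labels constitute \emph{independent RCN at a single rate} is not literally true in either framing, since the disagreement rate varies block by block; the paper is also terse on this point, but its $\blockpoint$-based formulation together with the TV bound lets the Low-Degree analysis go through with less to justify.
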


Our proof uses downsampling to reduce our learning problem over $\RR^d$ to learning over a hypergrid, $[N]^d$, under the uniform distribution with mild label noise. In \Cref{sec:downsampling} we synthesize the results from \cite{HarmsY22} which we borrow for our proof. In \Cref{sec:hypergrids} we give two learning results for hypergrids whose time complexities correspond to the two arguments inside the $\min$ expression in \cref{eq:complexities}. In \Cref{sec:UB-final-proof} we describe the learning algorithm and prove its correctness.

Throughout this section, let $\mu = \prod_{i=1}^d \mu_i$ be any product measure over $\mathbb{R}^d$ and let $N$ be a power of two satisfying $8 k d / \eps \leq N \leq 16 k d /\eps$.

\subsection{Reduction to Hypergrids via Downsampling} \label{sec:downsampling}


The idea of downsampling is to construct a grid-partition of $\RR^d$ into $N^d$ blocks such that (a) the measure of each block under $\mu$ is roughly $N^{-d}$, and (b) the function $f$ we're trying to learn is constant on most of the blocks. Roughly speaking, this allows us to learn $f$ under $\mu$ by learning a proxy for $f$ over $[N]^d$ under the uniform distribution. The value of $N$ needed to achieve this depends on what \cite{HarmsY22} call the "block boundary size" of the function. Formally, the downsampling procedure constructs query access to maps $\block \colon \mathbb{R}^d \to [N]^d$ and $\blockpoint \colon [N]^d \to \mathbb{R}^d$ which have various good properties which we will spell out in the rest of this section. One should think of $\block$ as mapping each point $x \in \RR^d$ to the block of the grid-partition that $x$ belongs to and $\blockpoint$ as mapping each block to some specific point contained in the block. See \cite[Def 2.1]{HarmsY22} for a formal definition. Given these maps and a function $f \colon \mathbb{R}^d \to \{\pm 1\}$ we define the function $f^{\block} \colon [N]^d \to \{\pm 1\}$ as $f^{\block}(z) = f(\blockpoint(z))$. We let $\block(\mu)$ denote the distribution over $[N]^d$ induced by sampling $x \sim \mu$ and then taking $\block(x)$.


\begin{proposition} [Downsampling, \cite{HarmsY22}] \label{prop:HY} Let $f \colon \mathbb{R}^d \to \{0,1\}$ be a $k$-monotone function and $N,Q \in \mathbb{Z}^+$. Using 
\[
m:= O\left(\frac{NQ^2d^2}{\min(\delta,\eps)^2} \ln \left(\frac{Nd}{\delta}\right)\right)
\]
samples from $\mu = \mu_1 \times \cdots \times \mu_d$, there is a downsampling procedure that constructs query access to maps $\block \colon \mathbb{R}^d \to [N]^d$ and $\blockpoint \colon [N]^d \to \mathbb{R}^d$ such that with probability at least $1-\delta$ over the random samples, the following two conditions are satisfied:
\begin{enumerate}
    \item $\norm{\block(\mu) - \mathsf{unif}([N]^d)}_{\text{TV}} \leq \frac{\delta}{Q}$.
    \item $\Pr_{x \sim \mu}\left[f(x) \neq f^{\block}(\block(x))\right] \leq \eps$.
\end{enumerate}
The total running time and number of samples is $O(m)$.
\end{proposition}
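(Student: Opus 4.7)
The plan is to build $\block$ and $\blockpoint$ from empirical quantiles of each coordinate, then verify the TV guarantee via uniform concentration of empirical CDFs and the agreement guarantee via a hybrid argument combined with the fact that one-dimensional restrictions of $k$-monotone functions remain $k$-monotone.

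I would first draw $m$ i.i.d.\ samples from $\mu$ and, for each coordinate $i \in [d]$, sort their $i$-th entries and use the $\lfloor jm/N\rfloor$-th order statistics for $j = 1,\ldots,N-1$ as cut points, producing intervals $I_{i,1},\ldots,I_{i,N}$ partitioning $\RR$. Then $\block(x)_i$ is the index of the interval containing $x_i$, and $\blockpoint(z)$ can be taken to be the corner of the axis-aligned box $\block^{-1}(z)$ with smallest coordinates (any measurable, block-consistent choice works). For the TV guarantee I would apply the Dvoretzky--Kiefer--Wolfowitz inequality coordinate by coordinate: the stated value of $m$ suffices so that, with probability at least $1-\delta/3$, every empirical 1D CDF approximates the true CDF uniformly to within $\delta/(NdQ)$, so every $I_{i,b}$ has $\mu_i$-mass in $1/N \pm \delta/(NdQ)$. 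Since both $\block(\mu)$ and $\mathsf{unif}([N]^d)$ are product measures, the telescoping inequality $\|\bigotimes_i p_i - \bigotimes_i q_i\|_{\mathrm{TV}} \le \sum_i \|p_i - q_i\|_{\mathrm{TV}}$ yields the desired bound $\|\block(\mu) - \mathsf{unif}([N]^d)\|_{\mathrm{TV}} \le \delta/Q$.

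For the function-agreement guarantee, if $f(x) \ne f(\blockpoint(\block(x)))$ then along the path from $x$ to $\blockpoint(\block(x))$ that flips one coordinate at a time---which stays inside the single axis-aligned box $\block^{-1}(\block(x))$ at every intermediate step---some single-coordinate move must flip the value of $f$. Call $\block(x)$ \emph{$i$-bad} if such an in-block flip exists along direction $i$; the bad event is contained in $\bigcup_i \{x : \block(x) \text{ is } i\text{-bad}\}$. For any fixed choice of the other coordinates $x_{-i} \in \RR^{d-1}$, the restriction of $f$ to the corresponding axis-parallel line is itself $k$-monotone, since any $(k+1)$-alternating chain on the line is also one in $\RR^d$, and as a $\{\pm 1\}$-valued 1D $k$-monotone function it has at most $k$ sign changes, each falling into a single interval $I_{i,b}$. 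Thus $\Pr_{x_i \sim \mu_i}[\block(x) \text{ is } i\text{-bad} \mid x_{-i}] \le k \cdot \max_b \mu_i(I_{i,b}) \le (1+o(1))k/N$. Summing over $i$ and using $N \ge 8kd/\eps$ gives the required $\eps$ bound, and a final union bound over the two high-probability events completes the argument.

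The main obstacle is the second part: one must ensure $\blockpoint$ is a measurable, block-consistent choice so that the axis-by-axis path really stays in one block, and one must rely on the structural fact that restrictions of $k$-monotone functions to axis-parallel lines remain 1D $k$-monotone. Once those are established, matching the explicit $m = O(NQ^2d^2/\min(\eps,\delta)^2 \log(Nd/\delta))$ sample complexity is routine bookkeeping in the DKW threshold and the union bounds.
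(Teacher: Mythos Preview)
Your proposal is correct. The paper's own proof is purely citation-based: it invokes three results from \cite{HarmsY22} as black boxes---Prop.~2.5 (bounding the disagreement probability by $N^{-d}\cdot\mathsf{bbs}(f,N)+\|\block(\mu)-\mathsf{unif}\|_{\mathrm{TV}}$), Lemma~7.1 (the block-boundary-size bound $\mathsf{bbs}(f,N)\le kdN^{d-1}$ for $k$-monotone $f$), and Lemma~2.7 (the TV concentration for empirical quantile partitions)---and then plugs in $\beta=\min(\delta/4Q,\eps/8)$. Your argument is a self-contained re-derivation of exactly this content: the DKW-plus-telescoping step is the substance of their Lemma~2.7, and the hybrid-path argument together with the observation that one-dimensional restrictions of a $k$-monotone Boolean function have at most $k$ value changes is precisely what underlies their Prop.~2.5 combined with Lemma~7.1. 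What your route buys is that it avoids the intermediate ``block boundary size'' abstraction and the extra $\|\cdot\|_{\mathrm{TV}}$ term in the disagreement bound, at the cost of spelling out the coordinate-by-coordinate analysis; what the paper's route buys is brevity. One small wording issue: your ``$i$-bad'' is really a property of $x$, not of $\block(x)$, since the hybrid path's $i$-th step fixes coordinates $i{+}1,\ldots,d$ at $x_{i+1},\ldots,x_d$ rather than at block indices---but your subsequent conditioning on $x_{-i}$ handles this correctly.
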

\begin{proof} \cite[Prop. 2.5]{HarmsY22} shows that there is a randomized procedure using $m$ samples from $\mu$ and $O(m)$ time which constructs the maps $\block$ and $\blockpoint$ such that with probability $1$, we get
\begin{align}
    \Pr_{x \sim \mu}\left[f(x) \neq f^{\block}(\block(x))\right] \leq N^{-d} \cdot \mathsf{bbs}(f,N) + \norm{\block(\mu) - \mathsf{unif}([N]^d)}_{\text{TV}}
\end{align}
where $\mathsf{bbs}(f,N)$ is the $N$-block boundary size of $f$ \cite[Def. 2.4]{HarmsY22}, which is at most $k d N^{d-1}$ when $f$ is $k$-monotone \cite[Lemma 7.1]{HarmsY22}. Thus, the first of the two quantities in the RHS is at most $kd/N$ which is at most $\eps/8$ using our definition of $N$. Then, \cite[Lemma 2.7]{HarmsY22} states that 
\begin{align}
    \Pr\left[\norm{\block(\mu) - \mathsf{unif}([N]^d)}_{\text{TV}} > \beta \right] \leq 4 N d \cdot \exp\left(-\frac{\beta^2 m}{18Nd^2}\right)
\end{align}
and so invoking this lemma with $\beta := \min(\delta/4Q,\eps/8)$ and $m := \frac{18Nd^2}{\beta^2} \ln \left(\frac{16Nd}{\delta}\right)$ completes the proof. \end{proof}

\subsection{Learning over Hypergrids} \label{sec:hypergrids}

\Cref{prop:HY} gives us a way to learn the function $f$ over $\RR^d$ with respect to $\mu$ by learning the function $f^{\block}$ over $[N]^d$ with respect to $\block(\mu)$ (which is close to uniform) under mild \emph{agnostic label noise}, since we have imperfect query access to $f^{\block}$ (item (2)). Fortunately, this mild label noise and the fact that $\block(\mu)$ is not exactly uniform pose only minor technical hurdles.

We prove the following two upper bounds for learning over hypergrids under mild agnostic label noise. The bound in \Cref{lem:hypergrid-coupon} is relatively straightforward to prove using coupon collector arguments. We give a proof in \Cref{sec:coupon}. The bound in \Cref{lem:hypergrid-cube} relies on the agnostic version of the low-degree learning algorithm due to \cite{agnostichalfspaces08}.

\begin{lemma} [Agnostic Coupon Collecting Learner] \label{lem:hypergrid-coupon} Let $\delta \in (0,1)$, $\eta \in (0,1/8)$, and $N \in \mathbb{Z}^+$. Let $f \colon [N]^d \to \{\pm 1\}$, be any $k$-monotone function and let $D$ be a distribution over pairs $(x,y) \in [N]^d \times \{\pm 1\}$ with uniform marginal over $x$ and such that $\Pr_{(x,y) \sim D}[f(x) \neq y] \leq \eta$. There is an algorithm which, given any such $f$ and $D$, uses at most 
\[
O(N^d \log^2 (N^d/\eta\delta))
\]
samples from $D$ and returns $h \colon [N]^d \to \{\pm 1\}$, satisfying $\Pr_h[d(f,h) \leq 5\eta] \geq 1-\delta$. \end{lemma}

\begin{lemma} [Hypercube Mapping Learner] \label{lem:hypergrid-cube} Let $\eps, \delta \in (0,1)$, $\eta \in (0,1/2)$, and $N \in \mathbb{Z}^+$ be a power of two. Let $f \colon [N]^d \to \{\pm 1\}$ be a $k$-monotone function and let $D$ be a distribution over pairs $(x,y) \in [N]^d \times \{\pm 1\}$ with uniform marginal over $x$ and such that $\Pr_{(x,y) \sim D}[f(x) \neq y] \leq \eta$. There is an algorithm which, given any such $f$ and $D$, uses at most
\[
\mathrm{poly}(\log(1/\delta))\cdot(d\log N)^{O\left(\frac{k}{\eps}\sqrt{d \log N}\right)}
\]
samples from $D$ and returns $h \colon [N]^d \to \{\pm 1\}$, satisfying  $\Pr_h[d(f,h) \leq \eps + 8\eta] \geq 1-\delta$. \end{lemma}

\begin{proof} Let $b \colon [N] \to \{\pm 1\}^{\log N}$ denote the bijection which maps each element of $[N]$ to its bit representation. Let $\pmb{b} \colon [N]^d \to \{\pm 1\}^{d \log N}$ be defined as $\pmb{b}(x) = (b(x_1),\ldots,b(x_d))$. Given $f \colon [N]^d \to \{\pm 1\}$ define the function $f^{\cube} \colon \{\pm 1\}^{d \log N} \to \{\pm 1\}$ as $f^{\cube}(z) = f(\pmb{b}^{-1}(z))$.

\begin{fact} \label{fact:cube-alt} If $f$ is $k$-monotone over $[N]^d$, then $f^{\cube}$ is $k$-monotone over $\{\pm 1\}^{d \log N}$. \end{fact}

\begin{mdframed}[backgroundcolor=blue!4,hidealllines=true]
\begin{proof} Observe that if $\pmb{b}(x) \prec \pmb{b}(y)$ in $\{\pm 1\}^{d\log N}$, then $x \prec y$ in $[N]^d$. Thus, if $\pmb{b}(x_1) \prec \cdots \prec \pmb{b}(x_m)$ is an $m$-alternating chain for $f^{\cube}$, then $x_1 \prec \cdots \prec x_m$ is an $m$-alternating chain for $f$. Therefore, if $f^{\mathsf{cube}}$ is not $k$-monotone, then neither is $f$. \end{proof}
\end{mdframed}

Now, given \Cref{fact:cube-alt} and the bijection $\pmb{b} \colon [N]^d \to \{\pm 1\}^{d \log N}$, it suffices to provide a learning algorithm for $f^{\cube}$. This is achieved using the agnostic version of the low-degree algorithm due to \cite{agnostichalfspaces08}. We use the following theorem statement\footnote{Note that the statement of Observation 3 in \cite{agnostichalfspaces08} assumes $\delta$ is a constant. However, it is clear by inspecting the proof that it extends to arbitrary $\delta$ with an extra $\mathrm{poly}(\log(1/\delta))$ factor in the sample complexity. This is because the proof of Observation 3 works with the output of a single call to the standard low-degree learner, which has the desired sample complexity (see Fact 1 of \cite{agnostichalfspaces08}).} rephrased from \cite{agnostichalfspaces08}. 

\begin{theorem} [Observation 3 of \cite{agnostichalfspaces08}, restated] \label{thm:agnostic-lowdegree} Let $\cC$ be a concept class of Boolean functions over $\{\pm 1\}^d$ such that for some fixed positive integer $\tau$, all $f \in \cC$ satisfy $\sum_{S \subseteq [d] \colon |S| > \tau} \widehat{f}(S)^2 \leq \eps/2$. Then, there is an algorithm $\cA$ which, on any input $f \in \cC$ and any distribution $D$ over pairs $(x,y) \sim \{\pm 1\}^d \times \{\pm 1\}$ with uniform marginal over $x$, uses at most $d^{O(\tau)} \cdot \mathrm{poly}(\log 1/\delta))$ samples from $D$ and with probability $1-\delta$ outputs a hypothesis $h \colon \{\pm 1\}^d \to \{\pm 1\}$ such that $d(f,h) \leq \eps + 8\mathtt{opt}$ where $\mathtt{opt} = \min_{g \in \cC} \Pr_{(x,y) \sim D}[g(x) \neq y]$. \end{theorem}

Now, recall that we have access to distribution $D$ over pairs $(x,y) \in [N]^d \times \{\pm 1\}$ with uniform marginal over $x$ and such that $\Pr_{(x,y) \sim D}[f(x) \neq y] \leq \eta$. Since $f$ is $k$-monotone, we have 
\begin{align} \label{eq:opt-learn}
    \mathtt{opt} \leq \Pr_{(x,y) \sim D}[f(x) \neq y] \leq \eta \text{.}
\end{align}
We use the following Fourier concentration lemma due to \cite{BlaisCOST15} for $k$-monotone Boolean functions.

\begin{lemma} [\cite{BlaisCOST15}] \label{lem:Fourier-concentration} If $f \colon \{\pm 1\}^d \to \{\pm 1\}$ is $k$-monotone, then $\sum_{S \colon |S| > \frac{k\sqrt{d}}{\eps}} \widehat{f}(S)^2 \leq \eps$. \end{lemma}

Finally, using \cref{eq:opt-learn} and \Cref{lem:Fourier-concentration}, we can invoke \Cref{thm:agnostic-lowdegree} with $\tau = \frac{2k\sqrt{d \log N}}{\eps}$, concluding the proof of \Cref{lem:hypergrid-cube}. \end{proof}

\subsection{Putting it Together: Proof of \Cref{thm:UB-formal}} \label{sec:UB-final-proof}

\begin{proof} We now have all the tools to define the algorithm and prove its correctness. \\

\begin{algorithm}[H]
\label{alg:learner}
\SetAlgoLined
\textbf{Input:}
 $\eps,\delta \in (0,1)$ and access to examples from $EX(f,\mu)$ where $f \colon \RR^d \to \{\pm 1\}$ is $k$-monotone\;  
1. Let $N$ be a power of $2$ such that $\frac{8 k d}{\eps} \leq N \leq \frac{16 k d}{\eps}$. Let $\cA$ denote the learning algorithm for $k$-monotone functions $g \colon [N]^d \to \{\pm 1\}$ which has the smaller sample-complexity among the algorithms guaranteed by \Cref{lem:hypergrid-coupon} and \Cref{lem:hypergrid-cube}. Let $Q$ be the sample-complexity of $\cA$\;
2. Run the downsampling procedure of \Cref{prop:HY} to obtain the maps $\block$, $\blockpoint$, and access to the corresponding function $f^{\block} \colon [N]^d \to \{\pm 1\}$ \;
3. Obtain a set of $Q$ examples $S \in (\mathbb{R}^d \times \{\pm 1\})^Q$ from $(EX(f,\mu))^Q$\;
4. Let $S^{\block} = \{(\block(x),f(x)) \colon (x,f(x)) \in S\} \in ([N]^d \times \{\pm 1\})^Q$\;
5. Run $\cA$ using the sample $S^{\block}$, which returns a hypothesis $h^{\block} \colon [N]^d \to \{\pm 1\}$ for $f^{\block}$\;
\textbf{Return} the hypothesis $h \colon \RR^d \to \{\pm 1\}$ for $f \colon \RR^d \to \{\pm 1\}$ defined as $h(x) = h^{\block}(\block(x))$
 \caption{Learning algorithm for $k$-monotone functions under product measure $\mu$}
\end{algorithm}

\vspace{1cm}

Recall that given maps $\block \colon \mathbb{R}^d \to [N]^d$, $\blockpoint \colon [N]^d \to \mathbb{R}^d$, and a function $f \colon \mathbb{R}^d \to \{\pm 1\}$ we define the function $f^{\block} \colon [N]^d \to \{\pm 1\}$ as $f^{\block}(z) = f(\blockpoint(z))$. Recall that $\block(\mu)$ is the distribution over $\block(x) \in [N]^d$ when $x \sim \mu$. By \Cref{prop:HY}, step (2) of \Alg{learner} results in the following items being satisfied with probability at least $1-\delta$.
\begin{enumerate}
    \item $\norm{\block(\mu) - \mathsf{unif}([N]^d)}_{\text{TV}} \leq \frac{\delta}{Q}$.
    \item $\Pr_{x \sim \mu}\left[f(x) \neq f^{\block}(\block(x))\right] \leq \eps$.
\end{enumerate}

Firstly, by item (2), an example $(\block(x),f(x))$ where $x \sim \mu$, is equivalent to an example $(z,b) \sim D$ supported over pairs $[N]^d \times \{\pm 1\}$ such that (a) the marginal distribution of $z$ is $\block(\mu)$ and (b) $\Pr_{(z,b) \sim D}[f^{\block}(z) \neq b] \leq \eta = \eps$. In particular, the set $S^{\block} \in ([N]^d \times \{\pm 1\})^Q$ from step (4) of \Alg{learner} is distributed according to $D^Q$. Now, as stated, \Cref{lem:hypergrid-coupon} and \Cref{lem:hypergrid-cube} assume $\cA$ has access to $(x,y) \sim D_U$ where $x$ is \emph{uniformly distributed} over $[N]^d$. However, the following claim shows that since $\block(\mu)$ and $\mathsf{unif}([N]^d))$ are sufficiently close (item (1) above), the guarantees on $\cA$ from \Cref{lem:hypergrid-coupon} and \Cref{lem:hypergrid-cube} extend to the scenario when $\cA$ has sample access to $D^Q$ instead.


\begin{claim} \label{clm:close} Let $D,D_U$ be two distributions supported over $(x,y) \in [N]^d \times \{\pm 1\}$ with the following properties: (i) the marginal distributions of $x$ for $D$ and $D_U$ are $\cD$ and $\mathsf{unif}([N]^d)$ such that $\norm{\cD - \mathsf{unif}([N]^d)}_{TV} \leq \gamma$, and (ii) for a fixed $x$, the conditional distribution on $y$ is the same for $D$ and $D_U$. Additionally, let $\cC \colon \cX \to \{\pm 1\}$ be a concept class and let $\cA$ be an algorithm which given any $f \in \cC$, $\eps,\delta \in (0,1)$, uses a sample from $D_U^Q$ and produces $h$ satisfying $\Pr_{x \sim \mathsf{unif}([N]^d)}[h(x) \neq f(x)] \leq \eps$ with probability at least $1-\delta$. Then, given a sample from $D^Q$, $\cA$ produces $h$ satisfying $\Pr_{x \sim \cD}[h(x) \neq f(x)] \leq \eps + \gamma$ with probability at least $1-(\delta+\gamma Q)$. \end{claim}


Note that, if $\cA$ is given a sample from $D_U^Q$, then by \Cref{lem:hypergrid-coupon} and \Cref{lem:hypergrid-cube}, we obtain a hypothesis with error at most $\eps + 8\eta \leq 9\eps$ with probability $1-\delta$. Therefore, using \Cref{clm:close} and item (1) above, given the sample from $D^Q$, if step (2) of \Alg{learner} succeeds, then with probability at least $1-2\delta$, step (5) produces $h^{\block}$ such that 
\[
\Pr_{z \sim \block(\mu)}[h^{\block}(z) \neq f^{\block}(z)] \leq 9\eps + \frac{\delta}{Q} \leq 10\eps \text{.}
\]
By the triangle inequality and using our definition of $h$ in the return statement of \Alg{learner}, we have
\begin{align}
    \Pr_{x \sim \mu}[h(x) \neq f(x)] &\leq \Pr_{x \sim \mu}[f(x) \neq f^{\block}(\block(x))] + \Pr_{x \sim \mu}[f^{\block}(\block(x)) \neq h^{\block}(\block(x))] \nonumber \\
    &= \Pr_{x \sim \mu}[f(x) \neq f^{\block}(\block(x))] + \Pr_{z \sim \block(\mu)}[f^{\block}(z) \neq h^{\block}(z)] \text{.}
\end{align}
The first term in the RHS is at most $\eps$ by item (2) above and the second term is at most $10\eps$ as we argued in the previous paragraph. Finally, adding up the failure probabilities of steps (2) and (5), we conclude that \Alg{learner} produces $h$ satisfying $\Pr_{x \sim \mu}[h(x) \neq f(x)] \leq 11\eps$ with probability at least $1-3\delta$. Therefore, running \Alg{learner} with rescaled inputs $\eps' = \eps/11$ and $\delta' = \delta/3$ yields the theorem as stated. \end{proof}

\subsubsection{Proof of \Cref{clm:close}}

\begin{proof} It is a well-known fact that for two distributions $\cD_1$ and $\cD_2$, the TV-distance between the corresponding product distributions satisfies $\norm{\cD_1^Q - \cD_2^Q}_{TV} \leq Q \norm{\cD_1 - \cD_2}_{TV}$ and thus we have 
\[
\norm{\cD^Q - \mathsf{unif}([N]^d)^Q}_{TV} \leq \gamma Q
\]
Given a set of $Q$ examples $S \in ([N]^d \times \{\pm 1\})^Q$, let $E(S)$ denote the event that the algorithm $\cA$ fails to produce a hypothesis with error at most $\eps$, after sampling $S$. First, note that conditioned on a sample $S$, the distributions over labels are the same, and therefore
\begin{align}
      \Pr_{S \sim D^Q}[E(S)] - \Pr_{S \sim D_U^Q}[E(S)] = \Pr_{S \sim \cD^Q}[E(S)] - \Pr_{S \sim \mathsf{unif}([N]^d)^Q}[E(S)] \text{.}
\end{align}
Using the definition of TV-distance we have
\begin{align}
    \Pr_{S \sim \cD^Q}[E(S)] - \Pr_{S \sim \mathsf{unif}([N]^d)^Q}[E(S)] \leq \norm{\cD^Q - \mathsf{unif}([N]^d)^Q}_{TV} \leq \gamma Q
\end{align}
and therefore
\begin{align}
     \Pr_{S \sim D^Q}[E(S)] \leq \Pr_{S \sim D_U^Q}[E(S)] + \gamma Q \leq \delta + \gamma Q
\end{align}
where we used $\Pr_{S \sim D_U^Q}[E(S)] \leq \delta$ by the assumption in the statement of the claim. Now, conditioned on $\neg E(S)$, we have that $\cA$ produces $h$ satisfying $\Pr_{x \sim \mathsf{unif}([N]^d)}[h(x) \neq f(x)] \leq \eps$. Again using our bound on the TV-distance, we have
\[
\Pr_{x \sim \cD}[h(x) \neq f(x)] - \Pr_{x \sim \mathsf{unif}([N]^d)}[h(x) \neq f(x)] \leq \norm{\cD - \mathsf{unif}([N]^d)}_{TV} \leq \gamma
\]
and so $\Pr_{x \sim \cD}[h(x) \neq f(x)] \leq \eps + \gamma$. \end{proof}

\section{Coupon Collecting Learner: Proof of \Cref{lem:hypergrid-coupon}} \label{sec:coupon}

We prove the following more general statement, from which \Cref{lem:hypergrid-coupon} is an immediate corollary.

\begin{lemma} [Agnostic Coupon Collector] \label{lem:coupon} Let $\delta \in (0,1)$, $\eta \in (0,1/8)$, and $B \in \mathbb{Z}^+$. Let $f \colon [B] \to \{\pm 1\}$, be any function and let $D$ be a distribution over pairs $(x,y) \in [B] \times \{\pm 1\}$ with uniform marginal over $x$ and such that $\Pr_{(x,y) \sim D}[f(x) \neq y] \leq \eta$. There is an algorithm which, given any such $f$ and $D$, uses at most $O(B \log^2 (B/\eta\delta))$ samples from $D$ and returns $h \colon [B] \to \{\pm 1\}$, satisfying $\Pr_h[d(f,h) \leq 5\eta] \geq 1-\delta$. \end{lemma}

\begin{proof} We are given samples from a distribution $D$ over pairs $(x,y) \in [B] \times \{\pm 1\}$ such that $\Pr_{(x,y) \sim D}[f(x) \neq y] \leq \eta$ and $x$ is distributed uniformly in $[B]$. Define $\eta(x) = \Pr_{(x,y)}[f(x) \ne y ~|~ x]$ to be the probability that $x$ is mislabeled when it is sampled. Then, we can write 
\[
\eta \geq \Pr_{(x,y) \sim D}[f(x) \neq y] = \Exp_{x \in [B]}[\eta(x)]
\]
and so by Markov's inequality the number of points $x \in [B]$ such that $\eta(x) > 1/4$ is at most $4\eta B$. Let $X_{\eta > 1/4} = \{x \in [B] \colon \eta(x) > 1/4\}$ and $X_{\eta\leq 1/4} = [B] \setminus X_{\eta > 1/4}$. As argued, we have $|X_{\eta > 1/4}| \leq 4\eta B$ and we will show that the labels of points in $X_{\eta\leq 1/4}$ can be learned using a simple majority vote. 

The learner is defined as follows. Take $s$ samples from $D$ and for each $x \in [B]$, let $m_{x}$ denote the number of times $x$ has been sampled. Let $m^+_x,m^{-}_x$ denote the number of times $x$ has been sampled with the label $+1,-1$ respectively. The learner outputs the hypothesis $h \colon [B] \to \{\pm 1\}$ defined by the majority vote, $h(x) = \mathsf{sgn}(m^+_x - m^-_x)$.

\begin{claim} \label{clm:maj} Let $x \in X_{\eta\leq 1/4}$ and suppose that $m_x \geq m := 8\ln(2/\beta)$. Then $\Pr[\mathsf{sgn}(m^+_x - m^-_x) \neq f(x)] \leq \beta$. \end{claim}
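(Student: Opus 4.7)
The plan is to prove \Cref{clm:maj} by a direct application of Hoeffding's inequality on $m_x$ i.i.d.\ bounded random variables, observing that the noisy label of $x$ is an unbiased estimator of $f(x) \cdot (1-2\eta)$, and thus majority vote recovers $f(x)$ with high probability as long as the sample size at $x$ exceeds the inverse-squared noise gap by a logarithmic factor.

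More concretely, I would first set up the following random variables. Condition on the event that $x$ is sampled exactly $m_x$ times; by independence of the noise, the $m_x$ labels observed at $x$ are i.i.d.\ $\pm 1$-valued random variables $X_1, \ldots, X_{m_x}$ with $\Pr[X_i = f(x)] = 1-\eta$ and $\Pr[X_i = -f(x)] = \eta$. By construction $m^+_x - m^-_x = \sum_{i=1}^{m_x} X_i$, and a short calculation gives
\[
\Expect[X_i] = (1-\eta) f(x) - \eta f(x) = (1-2\eta) f(x),
\]
so $\Expect\!\left[\sum_i X_i\right] = m_x (1-2\eta) f(x)$, which has magnitude $m_x(1-2\eta)$ and the same sign as $f(x)$.

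The next step is to observe that the learner's output disagrees with $f(x)$ exactly when $\mathsf{sgn}(\sum_i X_i) \neq f(x)$, and (assuming $\mathsf{sgn}(0)$ is fixed adversarially) this requires $f(x) \cdot \sum_i X_i \leq 0$, i.e.\ $\sum_i X_i$ deviates from its mean by at least $m_x(1-2\eta)$. Since each $X_i \in [-1,1]$, Hoeffding's inequality yields
\[
\Pr\!\left[\left|\sum_{i=1}^{m_x} X_i - \Expect\sum_{i=1}^{m_x} X_i\right| \geq m_x(1-2\eta)\right] \leq 2 \exp\!\left(-\frac{m_x (1-2\eta)^2}{2}\right).
\]
Plugging in the hypothesis $m_x \geq \frac{2}{(1-2\eta)^2}\ln(2/\beta)$ makes the right-hand side at most $\beta$, which proves the claim.

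There is no real obstacle here; the only care needed is handling the sign (one can just assume $f(x) = +1$ without loss of generality by symmetry, since the roles of $m^+_x$ and $m^-_x$ flip if $f(x) = -1$) and checking that the Hoeffding bound for variables in $[-1,1]$ gives a $2$ in the denominator of the exponent (as opposed to an $8$), which matches the constant in the definition of $m$. This claim then feeds into the remainder of the proof of \Cref{lem:hypergrid-coupon}: a coupon-collector/Chernoff argument shows that $s = O\!\big(N^d \cdot m \cdot \log(N^d/\delta)\big)$ uniform samples suffice to guarantee $m_x \geq m$ for all but an $\eps$-fraction of $x$ (with the failure probability $\beta$ in the claim tuned to $\eps/2$ and a union bound over the remaining points handled via Markov), giving the stated sample complexity $\widetilde{O}\!\left(\frac{1}{(1-2\eta)^2}(\log(1/\eps) + \log(1/\delta))\right) \cdot N^{O(d)}$.
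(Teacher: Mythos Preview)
Your proof of the claim is correct and essentially identical to the paper's: both compute $\Expect[m_x^+ - m_x^-] = m_x(1-2\eta)f(x)$ and apply Hoeffding's inequality for $\pm 1$-valued variables to get the bound $2\exp(-m_x(1-2\eta)^2/2) \le \beta$. Your closing sketch of how the claim feeds into \Cref{lem:hypergrid-coupon} differs slightly from the paper (which instead takes $m\ln(2m/\delta)N^{2d}$ samples to ensure $m_x \ge m$ for \emph{every} $x$ and sets $\beta = \eps\delta/2$ before applying Markov), but that is outside the scope of this claim and either route yields the stated $N^{O(d)}$ bound.
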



\begin{mdframed}[backgroundcolor=blue!4,hidealllines=true]
\begin{proof} Each label seen for $x$ is an independent $\{\pm 1\}$-valued random variable which is equal to $f(x)$ with probability $(1-\eta(x))$ where $\eta(x) \leq 1/4$. Thus, $\Exp[m^+_x - m^-_x] = m_x((1-\eta(x))f(x) + \eta(x)(-f(x))) = m_x(1-2\eta(x))f(x)$ and we have
\begin{align*}
    \Pr\left[\mathsf{sgn}(m^+_x - m^-_x) \neq f(x)\right] &\leq \Pr\left[\left|(m^+_x - m^-_x) - \Exp[m^+_x - m^-_x]\right| \geq m_x(1-2\eta(x))\right] \\
    &\leq 2\exp\left(-\frac{2m_x^2(1-2\eta(x))^2}{4m_x}\right) \leq 2\exp\left(-\frac{m_x^2}{8m_x}\right) \leq \beta
\end{align*}
by Hoeffding's inequality, the fact that $\eta(x) \leq 1/4$, and our bound on $m_x$. \end{proof}
\end{mdframed}

\begin{claim} \label{clm:batch} Suppose we take $s := B\ln(B/\alpha)$ samples. Then $\Pr[\exists x \in [B] \colon m_x = 0] < \alpha$. \end{claim}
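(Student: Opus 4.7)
The plan is a textbook coupon-collector argument with a union bound; the choice of $s = \ln(1/\alpha) N^{2d}$ is in fact very generous, so the estimates will have plenty of slack.

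First I would fix an arbitrary point $x \in [N]^d$ and bound the probability that $x$ is missed entirely. Since each of the $s$ samples is drawn independently and uniformly from $[N]^d$, we have
\[
\Pr[m_x = 0] = \left(1 - N^{-d}\right)^s \leq \exp\!\left(-s/N^d\right) = \exp\!\left(-\ln(1/\alpha)\, N^d\right) = \alpha^{N^d},
\]
using the standard inequality $1 - t \leq e^{-t}$ and plugging in the value of $s$ from the claim.

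Next I would take a union bound over all $x \in [N]^d$ to get
\[
\Pr\!\left[\exists\, x \in [N]^d \colon m_x = 0\right] \;\leq\; N^d \cdot \alpha^{N^d}.
\]
The final step is to argue that $N^d \cdot \alpha^{N^d} < \alpha$, i.e.\ $N^d \cdot \alpha^{N^d - 1} < 1$. For $\alpha$ at most a small constant (say $\alpha \leq 1/2$), we can write $\alpha^{N^d - 1} \leq 2^{-(N^d - 1)}$, and the inequality $N^d \cdot 2^{-(N^d - 1)} \leq 1$ holds for every $N^d \geq 2$ (which is always the case in our regime, since $N \geq 2$ and $d \geq 1$).

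There is essentially no obstacle here; the only thing to be slightly careful about is checking that the final numerical bound $N^d \alpha^{N^d} < \alpha$ goes through for the relevant range of $\alpha$ and $N^d$, which it does comfortably because $s$ is chosen polynomially larger than what standard coupon collecting requires. (One could sharpen the proof to use only $s = O(N^d \log(N^d/\alpha))$ samples, but the looser bound stated in the claim already suffices for the downstream application in \Cref{lem:hypergrid-coupon}.)
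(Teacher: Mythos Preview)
Your proof is correct and follows exactly the same approach as the paper: bound $\Pr[m_x=0]=(1-N^{-d})^s\le e^{-s/N^d}$ for a fixed $x$ and then take a union bound over $[N]^d$. The paper's own proof is in fact just the one-line version of yours, asserting $e^{-s/N^d}\ll \alpha N^{-d}$ without spelling out the final numerical check; your verification that $N^d\alpha^{N^d}<\alpha$ for $\alpha\le 1/2$ and $N^d\ge 2$ is a harmless elaboration.
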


\begin{mdframed}[backgroundcolor=blue!4,hidealllines=true]
\begin{proof} For any $x$, $\Pr[m_x = 0] = (1-\frac{1}{B})^s \leq \alpha/B$ and a union bound completes the proof. \end{proof}
\end{mdframed}

The following is an immediate corollary of \Cref{clm:batch}.

\begin{claim} \label{clm:m-coupons} Suppose we take $mB \cdot \ln(\frac{2mB}{\delta})$ samples. Then, $\Pr[\exists x \in [B] \colon m_x < m] < \delta/2$. \end{claim}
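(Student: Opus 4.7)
The plan is to reduce \Cref{clm:m-coupons} to \Cref{clm:batch} by a straightforward batching argument, exploiting the fact that \Cref{clm:batch} already guarantees every point is hit at least once with $\ln(1/\alpha) N^{2d}$ samples. The idea is: to ensure every $x$ is hit at least $m$ times, it suffices to run $m$ independent ``rounds'' of \Cref{clm:batch}, each of which ensures every $x$ is hit at least once, and then union bound.

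Concretely, I would partition the full sample of size $m \cdot \ln(2m/\delta) N^{2d}$ into $m$ disjoint batches $S_1,\ldots,S_m$, each of size $s' := \ln(2m/\delta) N^{2d}$. Each $S_i$ consists of i.i.d.\ uniform samples from $[N]^d$, so \Cref{clm:batch} applies with parameter $\alpha := \delta/(2m)$: for each fixed $i$, the probability that some $x \in [N]^d$ fails to appear in $S_i$ is at most $\alpha = \delta/(2m)$. A union bound over $i \in [m]$ then gives that, except with probability $m \cdot \delta/(2m) = \delta/2$, every $x$ appears in every batch $S_i$, which implies $m_x \geq m$ for all $x$ simultaneously.

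The only thing to check is that the parameters line up: invoking \Cref{clm:batch} with $\alpha = \delta/(2m)$ requires a batch of size $\ln(1/\alpha) N^{2d} = \ln(2m/\delta) N^{2d}$, and $m$ such batches total $m \cdot \ln(2m/\delta) N^{2d}$ samples, matching the claim's budget exactly. There is no real obstacle here; the whole argument is just ``coupon collecting in parallel,'' and the substantive probabilistic work has already been done in \Cref{clm:batch}.
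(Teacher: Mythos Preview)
Your proposal is correct and is essentially identical to the paper's own proof: the paper also partitions the samples into $m$ batches of size $\ln(2m/\delta)N^{2d}$, invokes \Cref{clm:batch} on each batch with $\alpha := \delta/(2m)$, and then union bounds over the $m$ batches to conclude.
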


\begin{mdframed}[backgroundcolor=blue!4,hidealllines=true]
\begin{proof} Partition the samples into $m$ batches of size $B\ln(\frac{2mB}{\delta})$. Invoke \Cref{clm:batch} on each batch of samples with $\alpha := \frac{\delta}{2m}$. By the claim, each batch of samples contains a least $1$ copy of every point in $[B]$ with probability at least $1-\frac{\delta}{2m}$. Thus, by a union bound over the $m$ batches, our sample contains at least $m$ copies of every point in $[B]$ with probability at least $1-\frac{\delta}{2}$. \end{proof}
\end{mdframed}


Let $m := 8\ln\left(4/\eta\delta\right)$ and $s := mB \ln(2mB/\delta)$. The learner takes $s$ samples from $D$. Let $\cE$ denote the event that $m_x \geq m$ for all $x \in [B]$. By \Cref{clm:m-coupons}, we have $\Pr[\cE] \geq 1-\frac{\delta}{2}$. For each $x \in [B]$, let $\cB_x = \mathbf{1}(\mathsf{sgn}(m_x^+-m_x^-) \neq f(x))$ be the indicator that $x$ is misclassified by the learner. By \Cref{clm:maj} (using $\beta := \eta\delta/2$), we have
\begin{align}
    \Exp\left[ \sum_{x \in X_{\eta \leq 1/4}} \cB_x ~\Big|~ \cE \right] \leq \frac{\eta\delta}{2} B \text{ and thus } \Pr\left[ \sum_{x \in X_{\eta \leq 1/4}} \cB_x > \eta B ~\Big|~ \cE\right] \leq \frac{\delta}{2}
\end{align}
by Markov's inequality. Therefore,
\begin{align}
    \Pr\left[\sum_{x \in X_{\eta \leq 1/4}} \cB_x > \eta B\right] &= \Pr\left[ \sum_{x \in X_{\eta \leq 1/4}} \cB_x > \eta B ~\Big|~ \cE\right]\Pr[\cE] + \Pr\left[ \sum_{x \in X_{\eta \leq 1/4}} \cB_x > \eta B ~\Big|~ \neg \cE\right]\Pr[\neg \cE] \\
    &\leq \Pr\left[ \sum_{x \in X_{\eta \leq 1/4}} \cB_x > \eta B ~\Big|~ \cE\right] + \Pr[\neg \cE]
\end{align}
which is at most $\delta$. Finally, since $|X_{\eta > 1/4}| \leq 4\eta B$ (as argued above), even if all of $X_{\eta > 1/4}$ is misclassified, the total number of misclassified points is at most $5 \eta B$ with probability at least $1-\delta$. The number of examples used by the learner is
\[
s = m B \ln\left(\frac{2m B}{\delta}\right) = O(B \ln^2(B/\eta\delta))
\]
and this completes the proof. \end{proof}

\section{Testing by Learning} \label{sec:testing-by-learning}


\begin{lemma} \label{lem:learning->testing} Let $\cX$ be a domain, let $\mu$ be a measure over $\cX$, and let $\cF \colon \cX \to \{\pm 1\}$ be a class of Boolean-valued functions over $\cX$. Suppose that for every $\eps \in (0,1)$ there exists a learning algorithm $L$ for $\cF$ under $\mu$ using $s(\eps)$ samples. Then for every $\eps \in (0,1)$ there is an $\eps$-tester for $\cF$ under $\mu$ using $s(\eps/4) + O(1/\eps^2)$ samples.
\end{lemma}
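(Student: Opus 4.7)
The plan is the standard testing-by-learning reduction: use the learner to produce a hypothesis $h$ that approximates $f$ whenever $f \in \cF$, then use a cheap fresh sample to estimate $d_\mu(f, g)$ for a suitable $g \in \cF$ close to $h$. Since $L$ is not assumed to be proper, I would not compare $f$ with $h$ directly but with its projection $h' := \arg\min_{g \in \cF} d_\mu(g, h)$. Computing this projection uses no additional samples (it is what produces the time-complexity blow-up alluded to in the footnote), so it does not affect the $s(\eps/4) + O(1/\eps^2)$ bound.

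Concretely, I would (i) draw $s(\eps/4)$ samples and run $L$ with accuracy parameter $\eps/4$ and a constant success probability boosted to, say, $5/6$ (this costs only a constant factor in samples, absorbed into $s(\eps/4)$); (ii) compute $h'$ as above; (iii) draw $m = O(1/\eps^2)$ fresh samples $x_1, \dots, x_m \sim \mu$ and form the empirical estimate $\hat{\Delta} := \frac{1}{m}\sum_{i=1}^m \mathbf{1}[f(x_i) \neq h'(x_i)]$ of $d_\mu(f, h')$; (iv) accept iff $\hat{\Delta} \le 3\eps/4$. By Hoeffding's inequality, choosing $m = O(1/\eps^2)$ with a large enough constant makes $|\hat{\Delta} - d_\mu(f, h')| \le \eps/8$ with probability at least $5/6$.

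For completeness, if $f \in \cF$, then with probability $\ge 5/6$ the learner gives $d_\mu(f, h) \le \eps/4$; since $f$ itself is a candidate in the $\arg\min$ defining $h'$, we get $d_\mu(h', h) \le d_\mu(f, h) \le \eps/4$, so by the triangle inequality $d_\mu(f, h') \le \eps/2$, and conditional on the estimation step succeeding, $\hat{\Delta} \le 5\eps/8 < 3\eps/4$. For soundness, if $f$ is $\eps$-far from $\cF$, then since $h' \in \cF$ we have $d_\mu(f, h') \ge \eps$ unconditionally on the learner's output, so conditional on the estimation step, $\hat{\Delta} \ge 7\eps/8 > 3\eps/4$. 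A union bound over the two $1/6$ failure probabilities gives overall success at least $2/3$. The only subtle point is handling possibly improper learners via the projection step; everything else is a straightforward triangle inequality and Hoeffding bound.
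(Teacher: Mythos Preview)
Your proposal is correct and follows essentially the same approach as the paper: project the (possibly improper) hypothesis onto $\cF$, estimate $d_\mu(f,h')$ with $O(1/\eps^2)$ fresh samples via Hoeffding, and threshold at $3\eps/4$. The only cosmetic differences are that the paper uses estimation accuracy $\eps/4$ rather than your $\eps/8$, and you are slightly more explicit about boosting the learner's success probability to $5/6$ before applying the union bound.
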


\begin{proof} In the following, for functions $f,g \colon \cX \to \{\pm 1\}$, let $d(f,g) = \pr_{x \sim \mu}[f(x) \neq g(x)]$ and $d(f,\cF) = \min_{g \in \cF} d(f,g)$. We define the property testing algorithm $T$ as follows.
\begin{enumerate}
    \item Take $s := s(\eps/4,\cX)$ samples $(x_{1},f(x_1)),\ldots,(x_{s},f(x_s))$ and run $L$ to obtain a hypothesis $h$ for $f$.
    \item Compute a function $g \in \cF$ for which $d(h,g) = d(h,\cF)$. (We remark that this step incurs a blowup in time-complexity, but does not require any additional samples.)
    \item Take $s' := \frac{20}{\eps^2}$ new samples $(z_{1},f(z_1)),\ldots,(z_{s'},f(z_{s'}))$ and let $\alpha := \frac{1}{s'} \sum_{i=1}^{s'} \mathbf{1}(g(z_i) \neq f(z_i))$ be an empirical estimate for $d(f,g)$.
    \item If $\alpha \leq \frac{3\eps}{4}$, then accept. If $\alpha > \frac{3\eps}{4}$, then reject.
\end{enumerate}

\begin{claim} \label{clm:close-vs-far} If $f \in \cF$, then $\pr[d(f,g) \leq \eps/2] \geq 5/6$. \end{claim}

\begin{mdframed}[backgroundcolor=blue!4,hidealllines=true]
\begin{proof} By the guarantee of the learning algorithm, we have $\pr[d(f,h) \leq \eps/4] \geq 5/6$. Now, since $g$ is a function in $\cF$ as close as possible to $h$, we have $d(h,g) \leq d(h,f)$. Thus, if $d(f,h) \leq \eps/4$, then $d(h,g) \leq \eps/4$ as well. Thus, by the triangle inequality, with probability at least $5/6$ we have $d(f,g) \leq d(f,h) + d(h,g) \leq \eps/2$ as claimed. \end{proof}
\end{mdframed}

Now, consider the quantity $\alpha$ from step (4) of the algorithm, $T$. Let $X$ be the Bernoulli random variable which equals $1$ with probability $d(f,g)$. Note that $\alpha = \frac{1}{s'}\sum_{i=1}^{s'} X_i$ where the $X_i$'s are independent copies of $X$. Using Hoeffding's inequality we have
\begin{align*}
    \pr\left[|\alpha - d(f,g)| \geq \frac{\eps}{4}\right] = \pr\left[\left|\sum_{i=1}^{s'} X_i - s' \cdot d(f,g)\right| \geq \frac{s' \eps}{4}\right] \leq 2\exp\left(-\frac{2 \cdot (s'\eps/4)^2}{s'}\right) = \frac{2}{e^{\frac{s' \eps^2}{8}}}
\end{align*}
which is at most $1/6$ when $s' \geq 8\ln(12)/\eps^2$. We can now argue that the tester $T$ succeeds with probability at least $2/3$. There are two cases to consider. 

\begin{enumerate}
    \item $f \in \cF$: By \Cref{clm:close-vs-far}, $d(f,g) > \eps/2$ with probability less than $1/6$ and by the above calculation $|\alpha - d(f,g)| \geq \eps/4$ with probability at most $1/6$. By a union bound, with probability at least $2/3$ neither event occurs, and conditioned on this we have $\alpha \leq d(f,g) + \eps/4 \leq 3\eps/4$ and the algorithm accepts.
    \item $d(f,\cF) \geq \eps$: Then $d(f,g) \geq \eps$ since $g \in \cF$. Again, $|\alpha - d(f,g)| < \eps/4$ with probability at least $5/6$ and conditioned on this event occurring we have $\alpha > d(f,g) - \eps/4 \geq 3\eps/4$ and the algorithm rejects.
\end{enumerate}

Therefore, $T$ satisfies the conditions needed for \Cref{lem:learning->testing}. \end{proof}

\section{Sample-Based Testing with One-Sided Error}

In this section we prove \Cref{thm:one-sided}, our upper and lower bound on sample-based testing with one-sided error over the hypercube.

\begin{proof}[Proof of \Cref{thm:one-sided}] By a coupon-collecting argument, there is an $O(d \cdot 2^d)$ sample upper bound for \emph{exactly learning} any function over $\{0,1\}^d$ under the uniform distribution and therefore the upper bound is trivial.

It suffices to prove the lower bound for the case of $r = 2$ and $k=1$, i.e. for testing monotonicity of Boolean functions. We will need the following fact.

\begin{fact} \label{fact:shatter} Let $A \subset \{0,1\}^d$ be any anti-chain\footnote{A set $A \subset \{0,1\}^{d}$ is an \emph{anti-chain} if $x \not\preceq y$ for every $x \neq y \in A$. That is, no two points in $A$ are comparable under the partial order.} and let $\ell \colon A \to \{0,1\}$ be any labeling of $A$. Then there exists a monotone function $f \colon \{0,1\}^d \to \{0,1\}$ such that $f(x) = \ell(x)$ for all $x \in A$. That is, $A$ \emph{shatters} the class of monotone functions. \end{fact}

Now, let $T$ be any monotonicity tester with one-sided error and let $S \subseteq \{0,1\}^d$ denote a set of $s$ i.i.d. uniform samples. Since $T$ has one-sided error, if the input function is monotone, then $T$ must accept. In other words, for $T$ to reject it must be sure without a doubt that the input function is not monotone. By \Cref{fact:shatter} for $T$ to be sure the input function is not monotone, it must be that $S$ is \emph{not} an anti-chain. Let $f \colon \{0,1\}^d \to \{0,1\}$ be any function which is $\eps$-far from monotone. Since $T$ is a valid tester, it rejects $f$ with probability at least $2/3$. By the above argument we have
\begin{align} \label{eq:one-sided1}
    2/3 \leq \pr_{S}[T \text{ rejects } f] \leq \pr_{S}[S \text{ is not an anti-chain}] \leq s^2 \cdot \pr_{x,y \sim \{0,1\}^d}[x \preceq y]
\end{align}
where the last inequality is by a union bound over all pairs of samples. We then have   
\begin{align} \label{eq:one-sided2}
    \pr_{x,y \sim \{0,1\}^d}[x \preceq y] = \pr_{x,y \sim \{0,1\}^d}[x_i \leq y_i \text{, } \forall i \in [d]] = \prod_{i=1}^d \pr_{x_i,y_i \sim \{0,1\}}[x_i \leq y_i] = (3/4)^d \text{.}
\end{align}
Thus, combining \cref{eq:one-sided1} and \cref{eq:one-sided2} yields $s \geq \sqrt{\frac{2}{3} (\frac{4}{3})^d} = \exp(\Omega(d))$. \end{proof}

\section{Acknowledgements} 

We would like to thank Eric Blais and Nathaniel Harms for helpful discussions during the early stages of this work and for their thoughtful feedback. We would also like to thank the anonymous reviewers whose comments helped significantly to improve this write up. In particular, we are especially grateful to an anonymous journal reviewer who pointed out a critical inconsistency in the noise model being used in \Cref{sec:main-monotone-k} in a previous version of the paper. 

\bibliographystyle{alpha}
\bibliography{biblio_main}

\end{document}